\newtheorem{theorem}{Theorem}
\newtheorem{corollary}{Corollary}
\newtheorem{remark}{Remark}
\newenvironment{proof}{{\noindent \it Proof:}}{\hfill $\blacksquare$\par}
\begin{document}
%
\title{Low-Complexity Frequency Domain Equalization over Fast Fading Channels}
%
%
%

\author{Hongyang Zhang, 
		Xiaojing Huang,~\IEEEmembership{Senior Member,~IEEE},\\ and J. Andrew Zhang,~\IEEEmembership{Senior Member,~IEEE}   
\thanks{H. Zhang, X. Huang and J. Andrew Zhang are with the School of Electrical and Data Engineering, and the Global Big Data Technologies Centre (GBDTC),  University of Technology Sydney, Ultimo, NSW, 2007, Australia (emails: Hongyang.Zhang-1@student.uts.edu.au, Xiaojing.Huang@uts.edu.au, and Andrew.Zhang@uts.edu.au).}
}

%
%

\markboth{}
{Submitted paper}
%



\maketitle

\begin{abstract}
Wireless communications over fast fading channels are challenging, requiring either frequent channel tracking or complicated signaling schemes such as orthogonal time frequency space (OTFS) modulation. In this paper, we propose low-complexity frequency domain equalizations to combat fast fading, based on novel discrete delay-time and frequency-Doppler channel models. Exploiting the circular stripe diagonal nature of the frequency-Doppler channel matrix, we introduce low-complexity frequency domain minimum mean square error (MMSE) equalization for OTFS systems with fully resolvable Doppler spreads. We also demonstrate that the proposed MMSE equalization is applicable to conventional orthogonal frequency division multiplexing (OFDM) and single carrier frequency domain equalization (SC-FDE) systems with short signal frames and partially resolvable Doppler spreads.  After generalizing the input-output data symbol relationship, we analyze the equalization performance via channel matrix eigenvalue decomposition and derive a closed-form expression for the theoretical bit-error-rate.  Simulation results for OTFS, OFDM, and SC-FDE modulations verify that the proposed low-complexity frequency domain equalization methods can effectively exploit the time diversity over fast fading channels. Even with partially resolvable Doppler spread, the conventional SC-FDE can achieve performance close to OTFS, especially in fast fading channels with a dominating line-of-sight path.    
\end{abstract}

\begin{IEEEkeywords}
Fast fading channel, Doppler shift, MMSE equalization and OTFS. 
\end{IEEEkeywords}

\IEEEpeerreviewmaketitle

%
\IEEEpeerreviewmaketitle

\section{Introduction}
%
%
%
%
\IEEEPARstart {A}{chieving} reliable and efficient wireless communications in high mobility scenarios is challenging, mainly due to the difficulty in channel estimation and tracking, and the complexity in channel equalization \cite{8515088}. In current fourth generation (4G) mobile communication systems, orthogonal frequency division multiplexing (OFDM) and single-carrier frequency domain equalization (SC-FDE) are the key techniques. 
OFDM converts a frequency-selective time domain channel to multiple orthogonal frequency domain channels, and enables simple equalization and flexible resource allocation. 
SC-FDE is a kind of precoded OFDM with improved frequency diversity and power efficiency. However, there are also some disadvantages associated with OFDM and SC-FDE, such as sensitivity to the carrier frequency offset (CFO)~\cite{5282370}. When the CFO is caused by the difference between the transmitter's and receiver's local oscillators and/or the Doppler effect is resulted solely from the relative motion between the transmitter and receiver, the time-varying wireless channel can be converted into a time-invariant one after compensation for the CFO and/or the Doppler scaling~\cite{7055904}. However, when the relative motion between transmitter and receiver is not the only source of channel variation, the signal recovery needs more complicated estimation and equalization strategy. For example, the CFOs in multiple signal paths of an autonomous vehicle system may be caused by other moving reflectors such as cars and trains. In such scenarios, the wireless channel is hard to equalize, making Doppler effect a serious handicap in the emerging fifth generation (5G) systems for high mobility applications.

\par In conventional wireless communication systems, short frames of transmitted signals are used such that the channel fading can be assumed constant within a frame and the variation of the channel with time can be tracked from frame to frame. As such, the time-varying channel is treated as slow fading when the channel coherence time is large relative to the frame length. However, the channel variation cannot be treated as constant in fast fading channels where the channel coherence time is much smaller than the frame length. Many iterative algorithms such as maximum-a-posteriori (MAP) have been utilized in channel estimation and equalization to improve the performance of OFDM and SC-FDE in fast fading channels \cite{5524045,1561198,5961652}. In addition, channel tracking via Kalman filter or other techniques can reduce the pilot overhead significantly \cite{7582545,4527200,5946232}. However, besides the high complexity of such iterative algorithms, the lack of the capability to fully explore time diversity also limits the performance of the conventional wireless transmission schemes in fast fading channels.

\par The recently proposed orthogonal time frequency space (OTFS) modulation shows outstanding performance in fast fading channels~\cite{7925924}~\cite{8058662}, with advantages of both high spectral efficiency and relatively low peak-to-average power ratio (PAPR)~\cite{8599041}. OTFS is formulated in a two dimensional (2D) data plane and its signals can be represented in both delay-Doppler domain and frequency-time domain~\cite{8580850}. Using a long signal frame, this modulation technique can overcome the difficulty in equalizing signals with multiple Doppler frequency shifts and multipath fading, and hence can exploit the diversity in both time and frequency domains.  Current research on OTFS has been mostly focused on developing more efficient equalization techniques to fully exploit channel diversity. In~\cite{8503182}, an equalizer based on Markov chain Monte Carlo (MCMC) and a channel estimation method using the pseudo-random noise (PN) pilots are proposed. In~\cite{8424569} and~\cite{8377159}, the signal structure of OTFS in matrix form is analyzed and an equalizer based on the message passing (MP) algorithm is proposed. A simple sparse input-output relation for OTFS is derived in~\cite{8516353}, and  OTFS is presented in a more general form, called asymmetric OFDM, in~\cite{8599041}. OTFS shows the same performance as OFDM in static channels, but has the capability of adapting to fast fading channels. However, the above mentioned equalization techniques have some drawbacks. For example, they apply iterative algorithms to recover the signal, which greatly increases the computational complexity. The classical minimum mean square error (MMSE) equalization technique can hardly be directly applied to OTFS since the equalization requires matrix inversion which is prohibitively complicated due to the large size of the channel matrix. 

\par The high complexity in the OTFS equalization is linked to its currently widely used signal and channel models. In most of the recent studies on OTFS, e.g., in \cite{8859227} \cite{8686339}, the effective channel matrix is constructed as a product of permutation matrices and diagonal matrices to incorporate the signal multipath delays and Doppler shifts. It does not provide any explicit or systematic sparsity structure to be exploited in either time or frequency domain for complexity reduction. 
In addition, the multipath delays and Doppler shifts in such signal models are quantized to the delay-Doppler grid, which limits the modeling accuracy. 
Furthermore, most models are only developed for OTFS modulation based on the two-dimensional data matrix, e.g., in \cite{8424569} \cite{8756831}, and thus cannot be applied to other conventional modulations. More general signal and channel models are needed to characterize the fast fading channels and hence more suitable modulations can be proposed to fully exploit the channel diversity.  

\par 
In this paper, we first revisit the various channel representations of the fast fading channels and derive some novel signal and channel models in both time and frequency domains. We then demonstrate the circular stripe diagonal nature of the frequency-Doppler channel matrix and propose a low-complexity frequency domain MMSE equalization algorithm to combat fast fading channels. We also derive the discrete received signal models for conventional systems where the Doppler spread incurred in the fast fading channel is only partially resolvable. Using the new discrete signal models we show that MMSE equalization can also be applied in systems with conventional modulation, such as OFDM and SC-FDE, to improve their performance over fast fading channels. Finally, we analyze the theoretical performance of MMSE equalization and establish the relationships among the output SNR, the fast fading channel, and the signal modulation through eigenvalue decomposition of the channel matrix. Simulation results using OTFS, OFDM, and SC-FDE modulations under both line-of-sight (LOS) and non-line-of-sight (NLOS)  fast fading channels validate the theoretical analysis and show that the proposed MMSE equalization works well for both long and short signal frames, achieving large time diversity. Especially, SC-FDE is shown to achieve performance  close to that of OTFS, exemplifying how conventional modulations can benefit from the proposed low-complexity MMSE equalization techniques in fast fading channels. 
\par The contributions of this paper are summarized as follows.

\begin{itemize}
	\item First, the derived frequency domain signal model with the circular stripe diagonal frequency-Doppler channel matrix is valid for arbitrary multipath delays and Doppler shifts, enabling low-complexity frequency domain equalization to combat fast channel fading and paving the way for new modulation techniques such as OTFS to be adopted in practical systems for high mobility applications. 
	
	\item Second, the 
	derived signal models and equalization methods for short signal frame enable an existing standard-compatible system to improve its performance in fast fading channels without requiring any change in signaling protocol.
	
	\item Third, the proposed theoretical performance analysis method via channel matrix eigenvalue decomposition provides a new tool for better analyzing and understanding the impact of channel and signal modulation on  system performance.
	
	\item Finally, it is also demonstrated through simulation that our novel low-complexity frequency domain equalization method can work with imperfect channel estimation, which proves the suitability of the proposed technique for practical applications.
	
\end{itemize}

\par The rest of the paper is organized as follow. In Section \uppercase\expandafter{\romannumeral2}, the relationships among different channel representations are revisited. In Section \uppercase\expandafter{\romannumeral3}, both time and frequency domain received signal models over a fast fading channel are derived and a concise channel matrix expression is derived. In Section \uppercase\expandafter{\romannumeral4}, the MMSE equalization methods are proposed in both time and frequency domains under both fully and partially resolvable Doppler spread conditions. The complexity of the frequency domain MMSE equalization is also evaluated. In Section \uppercase\expandafter{\romannumeral5}, the theoretical equalization performance is analyzed, and then in Section \uppercase\expandafter{\romannumeral6} the simulation results are presented. Finally, conclusions are drawn in Section \uppercase\expandafter{\romannumeral7}. 

\section{Existing Channel and Signal Models}

\par In this section, we first revisit various existing representations of the fast fading channel and then present the conventional signal models in the time domain. 
\par Considering a single-input single-output (SISO) system and assuming that the continuous signal waveform transmitted over the channel is $s(t)$, the received signal can be expressed as 
\begin{align}\label{channelmodel}
r\left ( t \right )=\int_{-\infty}^{+\infty}\int_{-\infty}^{+\infty} h\left ( \tau , \nu  \right )s\left ( t-\tau  \right )e^{\mathbf{j}2\pi \nu t}d\tau d\nu +w\left ( t \right ),
\end{align}
where $h(\tau,\nu)$ is called the \textit{delay-Doppler spreading function}, $\mathbf{j}=\sqrt{-1}$ and $w(t)$ is the additive white Gaussian noise. For a sparse $P$-path channel, $h(\tau,\nu)$ is defined as
\begin{align}\label{hmodel}	
h(\tau,\nu)=\sum_{i=1}^{P}h_{i}\delta(\tau-\tau_{i})\delta(\nu-\nu_{i}),
\end{align}
where $h_{i}$, $\tau_{i}$, and $\nu_{i}$ are the path gain, delay and Doppler shift of the $i$-th path, respectively, and $\delta(\cdot)$ denotes the Dirac delta function satisfying
\begin{align}\label{eq1}
\int _{-\infty}^{+\infty}\delta(x)dx=1.
\end{align}

\par In addition to the \textit{delay-Doppler} representation, the fast fading channel can also be expressed in different domains. Applying the inverse Fourier transform (IFT) to $h(\tau,\nu)$ with respect to the Doppler frequency $\nu$, we obtain the \textit{delay-time} representation as 
\begin{align}\label{delaytimerepresentation}	
\begin{split}
h_{t}(\tau,t)= \int^{+\infty}_{-\infty} h(\tau,\nu)e^{\mathbf{j}2\pi \nu t}d\nu.
\end{split}
\end{align}
Similarly, applying Fourier transform (FT) to $h(\tau,\nu)$ with respect to the delay $\tau$, we obtain the \textit{frequency-Doppler} representation as 
\begin{align}\label{eq1}	
H_{\nu}(f,\nu)= \int^{+\infty}_{-\infty} h(\tau,\nu)e^{-\mathbf{j}2\pi f\tau}d\tau.
\end{align}
Applying both IFT and FT to $h(\tau,\nu)$ with respect to $\nu$ and $\tau$ respectively, we obtain the \textit{frequency-time} representation, also called time-frequency (TF) transfer function, as 
\begin{align}\label{TFtransfer}	
H(f,t)= \int^{+\infty}_{-\infty}\int^{+\infty}_{-\infty} h(\tau,\nu)e^{\mathbf{j}2\pi \nu t}e^{-\mathbf{j}2\pi f\tau}d\tau d\nu.
\end{align}
The relationship between $h_{t}(\tau,t)$ and $H_{\nu}(f,\nu)$ can be expressed as
\begin{align}\label{TDFDrelationship}	
H_{\nu}(f,\nu) = \int^{+\infty}_{-\infty}\int^{+\infty}_{-\infty} h_{t}(\tau,t)e^{-\mathbf{j}2\pi f\tau}e^{-\mathbf{j}2\pi \nu t} d\tau dt.
\end{align}
We see that $H_{\nu}(f,\nu)$ is the two-dimensional (2D) FT of $h_{t}(\tau,t)$. Fig. 1 shows the relationships among the various fast fading channel representations, which characterize the same fast fading channel in different ways.
\begin{figure}[t]
	\centering
	\includegraphics[width=1\linewidth]{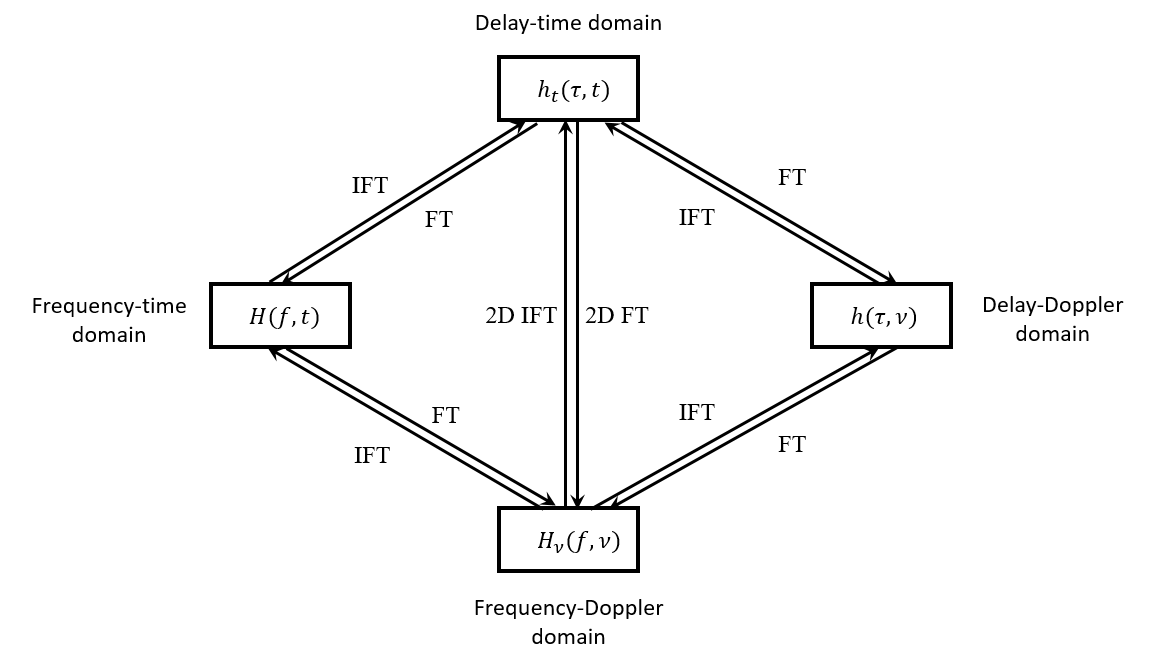}
	\caption{Relationships among different channel representations.}
	\label{fig:activeIndex}
\end{figure}
\par Note that signal models adopted in most of existing studies on OTFS systems, e.g., \cite{8424569,8859227,8686339} are analyzed in the time domain with following form
\begin{align}\label{formermodel}
r\left ( t \right )&=\int_{-\infty}^{+\infty}\int_{-\infty}^{+\infty} h\left ( \tau , \nu  \right )s\left ( t-\tau  \right )e^{\mathbf{j}2\pi \nu (t-\tau)}d\tau d\nu +w\left ( t \right )\nonumber\\
&=\int_{-\infty}^{+\infty}\int_{-\infty}^{+\infty} h'\left ( \tau , \nu  \right )s\left ( t-\tau  \right )e^{\mathbf{j}2\pi \nu t}d\tau d\nu +w\left ( t \right ),
\end{align}
where $h'(\tau,\nu)=h(\tau,\nu)e^{-\mathbf{j}2\pi \nu \tau}$.  
In the discrete form, the time-domain received signal is expressed as 
\begin{align}\label{formermodel}
r[n]=\displaystyle\sum_{i=1}^{P}h_{i}e^{\mathbf{j}2\pi \frac{k_{i}(n-l_{i})}{MN}}s[n-l_{i}]+w[n],
\end{align}
where $s[n]$ is the discrete transmitted signal, $l_{i}$ and $k_{i}$ are on-grid delay and Doppler shift for the $i$-th path, $w[n]$ is the noise sample, and $MN$ is the number of total data symbols expressed in an $M\times N$ matrix for a transmission frame.

We should notice that with this old model, the relationships illustrated in Fig.1 are no longer valid since the phase term $e^{-\mathbf{j}2\pi \nu \tau}$ is imposed on the delay-Doppler spreading function. As a result, the previous studies on OTFS seldom consider frequency-domain equalization.

\section{Novel Signal and Channel  Models for Fast Fading Channels} 
\par With the delay-time and frequency-Doppler representations of fast fading channels, we can derive novel received signal models in both time domain and frequency domain. In particular, our novel frequency domain models enable the design of low complexity equalization. In this section, model analysis in each aspect will all start from a continuous form and then transform into a discrete form to illustrate special constructions of channel matrices.
 
\subsection{Models in Time Domain}
Firstly, substituting \eqref{delaytimerepresentation} into \eqref{channelmodel} 
and letting $\tau ' = t-\tau$, we have
\begin{align}\label{TDreceive}	
r(t)&=\int^{+\infty}_{-\infty} h_{t}(t-\tau',t)s(\tau')d\tau'+w(t), 
\end{align}
which is the time domain received signal model over the considered fast fading channel. We see that the time domain received signal is a convolution of the time domain transmitted signal with the time-varying impulse response $h_{t}(\tau,t)$ in terms of delay $\tau$. 
\par We assume that the discrete-time transmitted signal $s[i]$ is a sequence of sampled continuous-time signal $s(id_{r})$ where $d_{r}$ denotes the sampling period which is also termed as the \textit{delay resolution}, and is of finite length modulated from a set of $M\times N$ origin data symbols $x[i], i=0,1,...,MN-1$.
\par In the matrix form, the data symbol set can be expressed as an $M\times N$ matrix $\mathbf{X}=[\mathbf{x}_{0},\mathbf{x}_{1},...,\mathbf{x}_{N-1}]$ where
\begin{align}\label{eq1}
\mathbf{x}_{n}=(x[nM],x[nM+1],...,x[nM+M-1])^{T},\nonumber\\
n=0,1,...,N-1,
\end{align}
is a column vector and  $(\cdot)^{T}$ denotes the transpose of a matrix.
We further assume that the maximum channel multipath delay is $d_{max}$ and the maximum Doppler frequency is $f_{max}$. Hence, the maximum number of resolvable multipaths is $L_{max} = \left \lceil d_{max}/d_{r} \right \rceil $, where $\left \lceil \cdot \right \rceil$ denotes the ceiling function to obtain the rounded up number, provided that the channel has a minimum bandwidth $1/d_{r}$, and the maximum number of resolvable Doppler frequencies (positive or negative side) is $K_{max} = \left \lceil f_{max}/f_{r} \right \rceil$, where $f_{r}$ is the \textit{Doppler resolution}, provided that the transmitted signal has a minimum frame length $1/f_{r}$.

Under the above assumptions, the discrete delay-time representation of the fast fading channel can be obtained as
\begin{align}\label{DTDrepresentation}
h_{t}[i,j] = h_{t}(id_{r},jd_{r})=\int_{-1/2d_{r}}^{1/2d_{r}}H(f,jd_{r})e^{\mathbf{j}2\pi fid_{r}}df.
\end{align}
For the sparse $P$-path channel expressed in Eq. \eqref{hmodel}, its TF transfer function can be expressed as
\begin{align}\label{DTDhmodel}
H(f,t)=\sum_{i=1}^{P}h_{i} e^{-\mathbf{j}2\pi f\tau_{i}}e^{\mathbf{j}v2\pi \nu_{i}t}.
\end{align}
\par Denote the discrete-time received signal sequence as a vector $\mathbf{r}$ and the transmitted signal sequence as a vector $\mathbf{s} = [s[0],s[1],...,s[MN-1]]^{T}$. From Eq. \eqref{TDreceive}, the discrete time domain received signal model can be expressed in the matrix form as  
\begin{align}\label{TDmodel}
\mathbf{r}=\mathbf{H}_{t}\mathbf{s+w},
\end{align}
where $\mathbf{w}$ denotes the noise vector and $\mathbf{H}_{t}$ is the delay-time channel matrix defined as
\begin{align}\label{Htmatrix}
\mathbf{H}_{t}=\begin{bmatrix}
h_{t}[0,0] &\cdots     &h_{t}[1,0] \\ 
h_{t}[1,1]& \cdots   & h_{t}[2,1]\\ 
\vdots  & \ddots    &\vdots  \\ 
h_{t}[MN-1,MN-1] & \cdots    & h_{t}[0,MN-1])
\end{bmatrix}.
\end{align}
\par Note that due to the signal discretization in both time and frequency domains, the linear convolution in \eqref{TDreceive} becomes a circular convolution and thus the delay-time domain channel matrix is constructed from a periodically extended $h_{t}[i,j]$ as illustrated in Fig. 2, where the shaded squares with different gray levels indicate different discrete values of $h_{t}[i,j]$.   We see that the coordinate transformation $\tau = t-\tau'$ maps the parallelogram enclosed by the dashed lines in $\tau-t$ coordinates onto a squared area in the $t-\tau'$ coordinates. It is also interesting to see that when the channel is time-invariant, $\mathbf{H}_{t}$ becomes a circulant matrix composed of the channel's impulse response.

\par Also note that the realization of circular convolution requires some signaling overhead in the transmitted signal frame, i.e., a cyclic prefix (CP) or zero-padded suffix with length longer than the maximum multipath delay should be inserted before or after the signal frame. This will be further illustrated in Section IV.B.
\begin{figure}[t]
	\centering
	\includegraphics[width=1\linewidth]{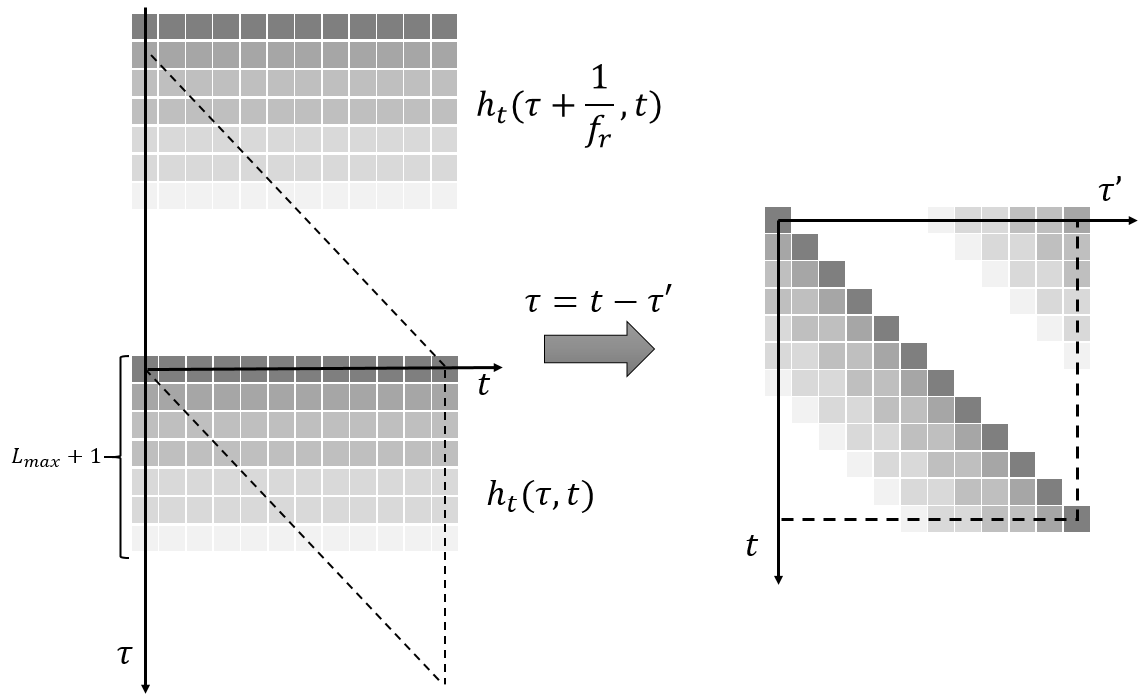}
	\caption{Construction of delay-time domain channel matrix.}
	\label{fig:activeIndex}
\end{figure}
\begin{remark}
	Different to conventional models, the delay-time domain channel matrix $\mathbf{H}_{t}$ introduced in this paper can be constructed for any multipath delay $\tau_{i}$ and Doppler shift $\nu_{i}$ through the TF transfer function as shown in Eqs. \eqref{DTDrepresentation} and \eqref{DTDhmodel}. Referring to the conventional discrete-time domain signal model used in most of the recent works \cite{8424569, 8686339, 8515088} based on Eq. \eqref{formermodel}, an effective channel matrix can be constructed as a product of many permutation matrices (representing the circular shifts of input signal samples) and phase-only diagonal matrices (representing the Doppler shifts) \cite{8516353}. Though the effective channel matrix demonstrates some sparsity, it is channel dependent and lacks the explicit time-varying convolution nature. In addition, such effective channel matrix is only valid for on-grid multipath delays and Doppler shifts. 
\end{remark}

\subsection{Frequency Domain Signal Models}

\par In this section, we first derive the frequency domain channel model which receives less attention in existing OTFS work and then present the concise structure of its channel matrix. 

Representing the transmitted signal in the frequency domain as $S(f)$ and substituting
\begin{align}\label{eq1}	
s(t)=\int^{+\infty}_{-\infty}S(f)e^{\mathbf{j}2\pi ft}df,
\end{align}
into \eqref{channelmodel}, we have
\begin{align}\label{eq1}	
r(t)&=\int^{+\infty}_{-\infty}\int^{+\infty}_{-\infty}\int^{+\infty}_{-\infty} h\left ( \tau , \nu  \right )S\left ( f  \right )e^{\mathbf{j}2\pi f(t-\tau)}e^{\mathbf{j}2\pi \nu t}d\tau d\nu df\nonumber\\
&+w(t)\nonumber\\
&=\int^{+\infty}_{-\infty}\int^{+\infty}_{-\infty} h_{t}\left ( \tau , t  \right )S\left ( f  \right )e^{\mathbf{j}2\pi f(t-\tau)}d\tau df+w(t)\nonumber\\
&=\int^{+\infty}_{-\infty} H(f,t)S(f)e^{\mathbf{j}2\pi ft}df+w(t).
\end{align}
Applying FT to $r(t)$, the frequency domain received signal can be modeled as
\begin{align}\label{FDreceive}	
R(f)&=\int^{+\infty}_{-\infty} r(t)e^{-\mathbf{j}2\pi ft}dt+W(f)\nonumber\\
&=\int^{+\infty}_{-\infty}\int^{+\infty}_{-\infty} H(f',t)e^{-\mathbf{j}2\pi (f-f')t}dt S(f')df'+W(f)\nonumber\\
&=\int^{+\infty}_{-\infty} H_{\nu}(f',f-f')S(f')df'+W(f),
\end{align}
where $W(f)$ is the additive white Gaussian noise in the frequency domain. The channel frequency response  at Doppler frequency $\nu$, $H_{\nu}(f,\nu)$, can also be  interpreted as a frequency-dependent Doppler response. We see that the frequency domain received signal is a convolution of the frequency domain transmitted signal with the frequency-dependent Doppler response in terms of Doppler frequency $\nu$.

\par Then, the discrete frequency-Doppler representation of the fast fading channel can be obtained as
\begin{align}\label{FDhmatrix}
H_{\nu}[i,j] = H_{\nu}(if_{r},jf_{r})=\int_{0}^{1/f_{r}}H(if_{r},t)e^{-\mathbf{j}2\pi f_{r}t}dt.
\end{align}
\par Denote the discrete received frequency signal as a vector R and the transmitted frequency domain signal as a vector $\mathbf{S} = [S[0],S[1],...,S[MN-1]]^{T}$ where $S[i]$ is the discrete Fourier transform (DFT) of $s[i]$. From Eq. \eqref{FDreceive}, the discrete frequency domain received signal model can be expressed in the matrix form as

\begin{align}\label{FDmodel}
\mathbf{R=H_{\nu}S+W},
\end{align}
where $\mathbf{W}$ denotes the frequency domain noise vector and $\mathbf{H}_{\nu}$ is the frequency-Doppler channel matrix defined as
	\begin{align}\label{Hvmatrix}
	\mathbf{H_{\nu}}=\begin{bmatrix}
	H_{\nu}[0,0]  & \cdots & H_{\nu}[MN-1,1] \\ 
	H_{\nu}[0,1]  & \cdots & H_{\nu}[MN-1,2]\\ 
	\vdots   & \ddots &\vdots  \\ 
	H_{\nu}[0,MN-1]  & \cdots  & H_{\nu}[MN-1,0]
	\end{bmatrix}.
	\end{align}
\par The frequency-Doppler channel matrix $\mathbf{H_{\nu}}$ is also constructed from a periodically extended $H_{\nu}[i,j]$ as illustrated in Fig. 3 where the shaded squares with different gray levels indicate different discrete values of $H_{\nu}[i,j]$. We also see that the coordinate transformation $\nu = f-f'$ maps the parallelogram enclosed by the dashed lines in $\nu-f'$ coordinates onto a squared area in $f-f'$ coordinates. 
\begin{figure}[t]
	\centering
	\includegraphics[width=1\linewidth]{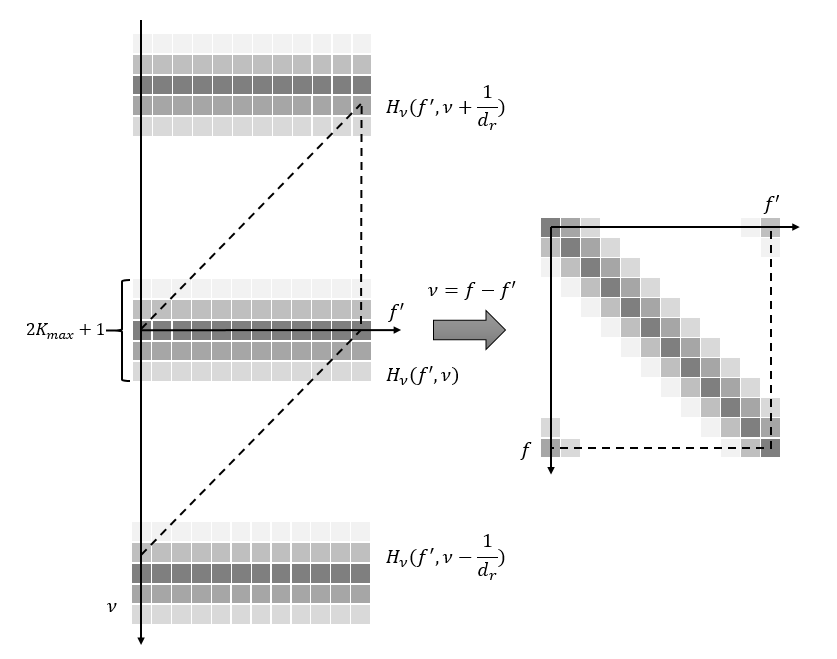}
	\caption{Construction of frequency-Doppler domain channel matrix.}
	\label{fig:activeIndex}
\end{figure}
\begin{remark}
	Similar discrete-frequency signal models demonstrating the circular stripe diagonal property in their channel matrices can be found in the literature \cite{BOOK2} for OFDM systems. However, such signal models do not consider the resolvability of the channel Doppler spread, and are also only valid for on-grid multipath delays and Doppler shifts. Due to the relatively short frame length of the OFDM symbols, the actual Doppler shifts incurred in the fast fading channel may not be fully resolvable. On the contrary, our proposed frequency-Doppler domain channel matrix $\mathbf{H}_{\nu}$ can be constructed for any multipath delay $\tau_{i}$ and Doppler shift $\nu_{i}$ through the TF transfer function as shown in Eqs. \eqref{FDhmatrix} and \eqref{DTDhmodel}.
\end{remark}

\subsection{Connections between Time-Domain and Frequency-Domain Models}
\par The time-varying impulse response and frequency-dependent Doppler response of the fast fading channel can be used to fully describe the input-output relationships of time domain and frequency domain signals, respectively. In addition to Eq. \eqref{TDFDrelationship}, the two channel representations are related to each other as described in the following theorem.

\begin{theorem}
	Given the delay-time and frequency-Doppler representations of a fast fading channel, $h_{t}(\tau,t)$ and $H_{\nu}(f,\nu)$, its time-varying impulse response can be expressed as $h_{t}(t-\tau',t)$, where the delay is defined as $\tau = t-\tau'$ referenced at time $\tau'$. Meanwhile, its frequency-dependent Doppler response can be expressed as $H_{\nu}(f',f-f')$, where the Doppler frequency is defined as $\nu=f-f'$ referenced at frequency $f'$. $h_{t}(t-\tau',t)$ and $H_{\nu}(f',f-f')$ have the relationship
	\begin{align}\label{TDFDtrans}	
	&H_{\nu}(f',f-f') \nonumber\\
	&= \int^{+\infty}_{-\infty}\int^{+\infty}_{-\infty} h_{t}(t-\tau',t)e^{\mathbf{j}2\pi f'\tau'}e^{-\mathbf{j}2\pi ft} d\tau' dt.
	\end{align}
\end{theorem}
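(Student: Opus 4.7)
The plan is to start from the already-established two-dimensional Fourier relation \eqref{TDFDrelationship}, namely
\begin{align*}
H_{\nu}(f,\nu) = \int_{-\infty}^{+\infty}\int_{-\infty}^{+\infty} h_{t}(\tau,t) e^{-\mathbf{j}2\pi f\tau} e^{-\mathbf{j}2\pi \nu t}\, d\tau\, dt,
\end{align*}
and massage it into the form stated in the theorem by two purely mechanical steps: a relabelling of the frequency arguments and a change of variables in the delay integral.

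First I would specialise the identity above at $f\mapsto f'$ and $\nu\mapsto f-f'$, which immediately yields
\begin{align*}
H_{\nu}(f',f-f') = \int_{-\infty}^{+\infty}\int_{-\infty}^{+\infty} h_{t}(\tau,t) e^{-\mathbf{j}2\pi f'\tau} e^{-\mathbf{j}2\pi (f-f') t}\, d\tau\, dt.
\end{align*}
Next, for each fixed $t$, I introduce the substitution $\tau' = t-\tau$ inside the inner integral, i.e. $\tau = t-\tau'$ with $d\tau = -d\tau'$; swapping the limits turns the sign back, so the domain of $\tau'$ remains $(-\infty,+\infty)$. This substitution is the conceptual heart of the statement, because it re-interprets the delay variable as a reference time $\tau'$ at which the signal entered the channel, matching the description given in the theorem.

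After the substitution, the integrand carries the phase factor
\begin{align*}
e^{-\mathbf{j}2\pi f'(t-\tau')} e^{-\mathbf{j}2\pi (f-f') t} = e^{\mathbf{j}2\pi f'\tau'} e^{-\mathbf{j}2\pi f' t} e^{-\mathbf{j}2\pi f t} e^{\mathbf{j}2\pi f' t} = e^{\mathbf{j}2\pi f'\tau'} e^{-\mathbf{j}2\pi f t},
\end{align*}
where the two $e^{\pm\mathbf{j}2\pi f' t}$ terms cancel. Substituting this simplification back produces exactly the formula in \eqref{TDFDtrans}, completing the derivation.

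There is essentially no hard step here; the proof is a short calculation. The only thing to watch carefully is the bookkeeping during the change of variables (sign flip of $d\tau$ versus reversal of limits, and keeping $t$ fixed throughout), together with the cancellation of the two phase terms that depend purely on $t$. Once those are handled, the equivalence between the $(\tau,t)$-parametrisation and the $(\tau',t)$-parametrisation of the delay axis, paired with the frequency shift $\nu = f-f'$, makes the two representations mirror images of one another, which is the real content the theorem is advertising.
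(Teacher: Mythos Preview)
Your proposal is correct and essentially identical to the paper's own proof: both rely on the same change of variables $\tau'=t-\tau$ in the inner integral and the same cancellation of the $e^{\pm\mathbf{j}2\pi f't}$ phases. The only cosmetic difference is that the paper first sets $\nu=f'-f$ and swaps $f\leftrightarrow f'$ at the end, whereas you substitute $f\mapsto f'$, $\nu\mapsto f-f'$ up front; either ordering yields the stated identity.
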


\begin{proof}
	Let $\nu=f'-f$ and express Eq. \eqref{TDFDrelationship} as
	\begin{align}\label{TDFDtrans2}
	&H_{\nu}(f,f'-f) \nonumber\\
	&=\int_{-\infty}^{+\infty}\int_{-\infty}^{+\infty}h_{t}(\tau,t)e^{-\mathbf{j}2\pi f\tau}e^{-\mathbf{j}2\pi (f'-f) t} d\tau dt\nonumber\\
	&=\int_{-\infty}^{+\infty}\int_{-\infty}^{+\infty}h_{t}(\tau,t)e^{\mathbf{j}2\pi f(t-\tau)}e^{-\mathbf{j}2\pi f't} d\tau dt.
	\end{align}
	Substituting $t-\tau$ with a new variable $\tau'$,  we have $\tau=t-\tau'$ and hence $d\tau = -d\tau'$ at any given $t$. Then, Eq. \eqref{TDFDtrans2} is further expressed as 
	\begin{align}\label{eq1}
	&H_{\nu}(f,f'-f) \nonumber\\
	&= -\int_{-\infty}^{+\infty}\int_{+\infty}^{-\infty} h_{t}(t-\tau',t)e^{\mathbf{j}2\pi f\tau'}e^{-\mathbf{j}2\pi f't} d\tau' dt\nonumber\\
	&= \int_{-\infty}^{+\infty}\int_{-\infty}^{+\infty} h_{t}(t-\tau',t)e^{\mathbf{j}2\pi f\tau'}e^{-\mathbf{j}2\pi f't} d\tau' dt.
	\end{align}
	Finally, interchanging the variables $f$ and $f'$, we have \eqref{TDFDtrans}.
\end{proof}

\par  Theorem 1 suggests that $H_{\nu}(f',f-f')$ is the IFT and FT of $h_t(t-\tau',t)$ in terms of $\tau'$ and $t$, respectively. In the discrete domain, we have the following corollary which can be easily derived from Theorem 1.
\begin{corollary}
	The relationship between the frequency-Doppler channel matrix $\mathbf{H_{\nu}}$ and the delay-time channel matrix $\mathbf{H}_{t}$  can be expressed as
	\begin{align}\label{eq1}	
	\mathbf{H}_{\nu} =\mathbf{F H}_{t}  \mathbf{F^{H}},
	\end{align}
	where $(\cdot)^{\mathbf{H}}$ denotes the conjugate and transpose of a matrix, $\mathbf{F}$ denotes the DFT matrix and  $\mathbf{F^{H}}$ denotes  the IDFT matrix. $\mathbf{F}$ and $\mathbf{F^{H}}$ satisfy $\mathbf{FF^{H}}=\mathbf{F^{H}F}=\mathbf{I}$, where $\mathbf{I}$ is the identity matrix of order $MN$. 
\end{corollary}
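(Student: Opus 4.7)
The plan is to push the continuous identity of Theorem 1 through the sampling grids and recognize the resulting double sum as a sandwich of $\mathbf{H}_t$ between a DFT and an IDFT matrix. First I would nail down what the $(k,l)$-entries of $\mathbf{H}_t$ and $\mathbf{H}_{\nu}$ are in terms of the scalar functions $h_t[\cdot,\cdot]$ and $H_\nu[\cdot,\cdot]$. From the circular-convolution interpretation behind Eq. \eqref{Htmatrix}, the entry $(\mathbf{H}_t)_{k,l}$ is $h_t[(k-l)\bmod MN,\,k]$, and from Eq. \eqref{Hvmatrix} the entry $(\mathbf{H}_\nu)_{k,l}$ is $H_\nu[l,\,(k-l)\bmod MN]$. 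Fixing these two identifications turns the corollary into a purely algebraic claim about DFTs.

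Next I would discretize Theorem 1. Choosing $t=nd_r$, $\tau'=md_r$, $f'=lf_r$, $f=kf_r$ with $f_r d_r = 1/(MN)$, the phases $e^{\mathbf{j}2\pi f'\tau'}$ and $e^{-\mathbf{j}2\pi ft}$ become $e^{\mathbf{j}2\pi lm/(MN)}$ and $e^{-\mathbf{j}2\pi kn/(MN)}$; the integrals over one period in each variable turn into finite sums over $MN$ samples, and the argument $t-\tau'$ becomes $(n-m)\bmod MN$ because the underlying convolution is circular. This gives the discrete counterpart
\begin{align*}
H_\nu[l,k-l]=\frac{1}{MN}\sum_{n,m} h_t[(n-m)\bmod MN,\,n]\, e^{\mathbf{j}2\pi lm/(MN)} e^{-\mathbf{j}2\pi kn/(MN)}.
\end{align*}

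Then I would compute $(\mathbf{F}\mathbf{H}_t\mathbf{F}^{\mathbf{H}})_{k,l}$ directly: with the unitary convention $F_{k,n}=(MN)^{-1/2}e^{-\mathbf{j}2\pi kn/(MN)}$, expanding the triple product and using the identification of $(\mathbf{H}_t)_{n,i}$ gives a double sum in $n$ and $i$. Substituting $p=(n-i)\bmod MN$ separates the $e^{-\mathbf{j}2\pi kn/(MN)}$ and $e^{\mathbf{j}2\pi il/(MN)}$ factors into an $e^{\mathbf{j}2\pi (l-k)n/(MN)}$ piece and an $e^{-\mathbf{j}2\pi lp/(MN)}$ piece, which is exactly the discretized Theorem 1 above with indices $(l,k-l)$. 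Hence $(\mathbf{F}\mathbf{H}_t\mathbf{F}^{\mathbf{H}})_{k,l}=H_\nu[l,(k-l)\bmod MN]=(\mathbf{H}_\nu)_{k,l}$, completing the proof. The unitarity conditions $\mathbf{F}\mathbf{F}^{\mathbf{H}}=\mathbf{F}^{\mathbf{H}}\mathbf{F}=\mathbf{I}$ merely require fixing a consistent normalization between the DFT and the sampled integrals.

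The main obstacle I expect is purely bookkeeping: the two matrices are indexed differently (one by $(\text{delay},\text{time})$, the other by $(\text{frequency},\text{Doppler})$) and the shift index lives in the first slot of $\mathbf{H}_t$ but in the second slot of $\mathbf{H}_\nu$. Getting the $\bmod MN$ shift, the sign of the exponentials, and the orientation of $\mathbf{F}$ versus $\mathbf{F}^{\mathbf{H}}$ to line up correctly is where a careless substitution would most easily produce a transpose or a conjugate error; once the substitution $p=(n-i)\bmod MN$ is carried out carefully, the match with the discretized Theorem 1 is immediate.
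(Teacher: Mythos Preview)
Your approach is correct and follows exactly the route the paper indicates: the paper gives no explicit proof, stating only that the corollary ``can be easily derived from Theorem 1,'' and your discretization-and-index-matching argument is the natural execution of that hint, with the entry identifications $(\mathbf{H}_t)_{k,l}=h_t[(k-l)\bmod MN,\,k]$ and $(\mathbf{H}_\nu)_{k,l}=H_\nu[l,(k-l)\bmod MN]$ read off correctly from Eqs.~\eqref{Htmatrix} and \eqref{Hvmatrix}.

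It is worth noting, though, that the paper's own signal models already encode a one-line alternative that bypasses all the bookkeeping you anticipate: since $\mathbf{r}=\mathbf{H}_t\mathbf{s}$ and $\mathbf{R}=\mathbf{H}_\nu\mathbf{S}$ with $\mathbf{R}=\mathbf{F}\mathbf{r}$ and $\mathbf{S}=\mathbf{F}\mathbf{s}$ for every input $\mathbf{s}$, one immediately gets $\mathbf{H}_\nu\mathbf{F}\mathbf{s}=\mathbf{F}\mathbf{H}_t\mathbf{s}$ for all $\mathbf{s}$, hence $\mathbf{H}_\nu=\mathbf{F}\mathbf{H}_t\mathbf{F}^{\mathbf{H}}$. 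Your route has the advantage of verifying the matrix definitions explicitly against the continuous transforms, while the signal-model route is shorter and avoids the sign-and-shift pitfalls you flag.
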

\par This corollary reveals a generalized relationship between time domain and frequency domain channel models. For the conventional time-invariant channel, $\mathbf{H}_{t}$ is a circulant matrix composed of the channel's impulse response, and $\mathbf{H_{\nu}}$ is a diagonal matrix composed of $\mathbf{H}_{t}$'s eigenvalues which represent the channel's frequency response. However, for time-varying channels, $\mathbf{H_{\nu}}$ is not a diagonal matrix any more but a circular stripe diagonal matrix with the stripe width equaling to $2K_{max}+1$ as shown in Fig. 3.
\section{Proposed Low-Complexity MMSE Equalization}
\par We introduce MMSE equalization based on the derived discrete received signal models in the time and frequency domains. Both systems with fully and partially resolvable Doppler spreads are considered.  
\subsection{MMSE Equalization with Fully Resolvable Doppler Spread}
As discussed in Section II, given the multipath and Doppler frequency resolutions $d_{r}$ and $f_{r}$, the transmission system requires a minimum bandwidth of $1/d_{r}$ and the transmitted signal requires a minimum length of $1/f_{r}$. Under these conditions, the received signal can be modeled in the discrete time and frequency domains as shown in \eqref{TDmodel} and \eqref{FDmodel} respectively. Further assume that the channel matrices are known. We can then recover the transmitted signal in either the time or frequency domain under the MMSE criterion. 
In the discrete time domain, an estimate of the transmitted signal after MMSE equalization can be expressed as
\begin{align}\label{eq1}
\mathbf{\hat{s}=G}_{t}\mathbf{r},
\end{align}
where $\mathbf{G}_{t}$ can be derived, following a well established process, as
\begin{align}\label{MMSETDGmatrix}
\mathbf{G}_{t}=\mathbf{H}^{\mathbf{H}}_{t}(\mathbf{H}_{t}\mathbf{H}_{t}^{\mathbf{H}}+\frac{1}{\gamma_{in}}\mathbf{I})^{-1},
\end{align}
$\gamma_{in}$ denotes the input signal-to-noise ratio (SNR) at the receiver~\cite{4392126}. 

In the discrete frequency domain, an estimate  of the transmitted signal after MMSE equalization can be expressed as
\begin{align}\label{eq1}
\mathbf{\hat{S}=G_{\nu}R},
\end{align}
where $\mathbf{G_{\nu}}$ is similarly derived as 
\begin{align}\label{MMSEFDGmatrix}
\mathbf{G_{\nu}=H_{\nu}^{H}}(\mathbf{H_{\nu}H_{\nu}^{H}}+\frac{1}{\gamma_{in}}\mathbf{I})^{-1}.
\end{align}
\par The two methods can achieve the same MMSE as stated in the following theorem.

\begin{theorem}
	Under the same channel condition, time domain and frequency domain MMSE equalizers produce the same mean square error.
\end{theorem}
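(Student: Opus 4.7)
The plan is to exploit Corollary 1, which already gives the unitary similarity $\mathbf{H}_{\nu} = \mathbf{F}\mathbf{H}_{t}\mathbf{F}^{\mathbf{H}}$, together with the fact that the DFT matrix $\mathbf{F}$ is unitary ($\mathbf{F}^{\mathbf{H}}\mathbf{F} = \mathbf{F}\mathbf{F}^{\mathbf{H}} = \mathbf{I}$). My strategy is to show that the two equalizer outputs differ only by a DFT, and then invoke the norm-preserving property of unitary transformations to conclude the MSEs coincide.

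First I would set up notation for the mean square error in both domains, $\mathrm{MSE}_t = \mathbb{E}\bigl[\,\lVert \mathbf{s}-\hat{\mathbf{s}}\rVert^{2}\bigr]$ and $\mathrm{MSE}_f = \mathbb{E}\bigl[\,\lVert \mathbf{S}-\hat{\mathbf{S}}\rVert^{2}\bigr]$, and record the DFT relations $\mathbf{S} = \mathbf{F}\mathbf{s}$, $\mathbf{R} = \mathbf{F}\mathbf{r}$, $\mathbf{W} = \mathbf{F}\mathbf{w}$ implied by \eqref{TDmodel}--\eqref{FDmodel}. Next I would substitute $\mathbf{H}_{\nu} = \mathbf{F}\mathbf{H}_{t}\mathbf{F}^{\mathbf{H}}$ into the frequency domain MMSE filter \eqref{MMSEFDGmatrix}, use the identity $(\mathbf{F}\mathbf{A}\mathbf{F}^{\mathbf{H}})^{-1}=\mathbf{F}\mathbf{A}^{-1}\mathbf{F}^{\mathbf{H}}$ whenever $\mathbf{A}$ is invertible (which applies to $\mathbf{H}_{t}\mathbf{H}_{t}^{\mathbf{H}} + \tfrac{1}{\gamma_{in}}\mathbf{I}$ since $\tfrac{1}{\gamma_{in}}\mathbf{I}$ makes it positive definite), and also $\mathbf{F}^{\mathbf{H}} = \mathbf{F}^{\mathbf{H}}\mathbf{F}\mathbf{F}^{\mathbf{H}}$ to sandwich $\mathbf{I}$ appropriately. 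This yields the key algebraic identity
\begin{align}
\mathbf{G}_{\nu} = \mathbf{F}\,\mathbf{G}_{t}\,\mathbf{F}^{\mathbf{H}}.\nonumber
\end{align}

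With this in hand I would compute the frequency domain estimate directly as $\hat{\mathbf{S}} = \mathbf{G}_{\nu}\mathbf{R} = \mathbf{F}\mathbf{G}_{t}\mathbf{F}^{\mathbf{H}}\mathbf{F}\mathbf{r} = \mathbf{F}\mathbf{G}_{t}\mathbf{r} = \mathbf{F}\hat{\mathbf{s}}$, so that the error vectors are related by $\mathbf{S}-\hat{\mathbf{S}} = \mathbf{F}(\mathbf{s}-\hat{\mathbf{s}})$. Taking squared norms and using $\mathbf{F}^{\mathbf{H}}\mathbf{F} = \mathbf{I}$ gives $\lVert \mathbf{S}-\hat{\mathbf{S}}\rVert^{2} = (\mathbf{s}-\hat{\mathbf{s}})^{\mathbf{H}}\mathbf{F}^{\mathbf{H}}\mathbf{F}(\mathbf{s}-\hat{\mathbf{s}}) = \lVert \mathbf{s}-\hat{\mathbf{s}}\rVert^{2}$ pointwise, and taking expectations yields $\mathrm{MSE}_f = \mathrm{MSE}_t$.

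There is no significant obstacle; the statement is essentially a restatement of the fact that an MMSE estimator is invariant under a unitary change of basis applied simultaneously to the signal, the observation, and the channel. The only subtlety worth mentioning is ensuring that the statistics are preserved: since $\mathbf{F}$ is unitary, $\mathbf{W} = \mathbf{F}\mathbf{w}$ remains zero-mean with the same covariance as $\mathbf{w}$, so the input SNR $\gamma_{in}$ used in \eqref{MMSEFDGmatrix} is indeed identical to that used in \eqref{MMSETDGmatrix}, which justifies treating the two Wiener filters as unitary conjugates of each other rather than distinct optimization problems.
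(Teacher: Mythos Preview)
Your proof is correct and rests on the same ingredient as the paper's --- Corollary~1 together with the unitarity of $\mathbf{F}$ --- but you package it differently. The paper expresses the MSE as $tr(\mathbf{I}-\mathbf{G}\mathbf{H})$ in each domain and then pushes the substitution $\mathbf{H}_{\nu}=\mathbf{F}\mathbf{H}_{t}\mathbf{F}^{\mathbf{H}}$ through the trace, cancelling $\mathbf{F}$ and $\mathbf{F}^{\mathbf{H}}$ pairs to obtain $tr(\mathbf{G}_{\nu}\mathbf{H}_{\nu})=tr(\mathbf{G}_{t}\mathbf{H}_{t})$ directly. You instead first isolate the filter identity $\mathbf{G}_{\nu}=\mathbf{F}\mathbf{G}_{t}\mathbf{F}^{\mathbf{H}}$, deduce $\hat{\mathbf{S}}=\mathbf{F}\hat{\mathbf{s}}$, and conclude by norm preservation. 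Your route is marginally more informative, since it shows the two estimates are literally the same signal viewed in different bases (a pointwise, realization-by-realization equality of error norms, not merely of their expectations), whereas the paper's trace argument is a bit more compact on the page. Either way the content is the same unitary-invariance observation.
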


\begin{proof}
The mean square error after MMSE equalization can be expressed as $tr(\mathbf{I-G}_{t}\mathbf{h}_{t})=tr(\mathbf{I})-tr(\mathbf{G}_{t}\mathbf{h}_{t})$ in the time domain and  $tr(\mathbf{I-G_{\nu}H_{\nu}})=tr(\mathbf{I})-tr(\mathbf{G_{\nu}H_{\nu}})$ in the frequency domain respectively. 
\par According to Eq. \eqref{MMSEFDGmatrix} and Corollary 1 , we have
\begin{align}\label{eq1}
&tr(\mathbf{G_{\nu}H_{\nu}})\nonumber\\
&=tr(\mathbf{H^{H}_{\nu}}(\mathbf{H_{\nu}H^{H}_{\nu}}+\frac{1}{\gamma_{in}}\mathbf{I})^{-1}\mathbf{{H}_{\nu}})\nonumber\\
&=tr(\mathbf{FH}^{\mathbf{H}}_{t}\mathbf{F^{H}}(\mathbf{FH}_{t}\mathbf{H}^{\mathbf{H}}_{t}\mathbf{F^{H}}+\frac{1}{\gamma_{in}}\mathbf{FF^{H}})^{-1}\mathbf{F{H}}_{t}\mathbf{F^{H}})\nonumber\\
&=tr(\mathbf{H}^{\mathbf{H}}_{t}\mathbf{F^{H}}(\mathbf{F}(\mathbf{H}_{t}\mathbf{H}^{\mathbf{H}}_{t}+\frac{1}{\gamma_{in}}\mathbf{I})\mathbf{F^{H}})^{-1}\mathbf{FH}_{t})\nonumber\\
&=tr(\mathbf{H^{H}}_{t}(\mathbf{H}_{t}\mathbf{H}^{\mathbf{H}}_{t}+\frac{1}{\gamma_{in}}\mathbf{I})^{-1}\mathbf{H}_{t})\nonumber\\
&=tr(\mathbf{G}_{t}\mathbf{H}_{t}),
\end{align}
which proves that the two mean square errors are the same.
\end{proof}

\par The above theorem implies that both the time domain and frequency domain MMSE equalizers achieve the same performance. However, due to the different constructions of the delay-time channel matrix and the frequency-Doppler channel matrix, the computational complexity involved in the calculation of the equalization matrix is different. Generally speaking, the number of Doppler frequencies is much less than the number of multipaths to be resolved in the transmission system. Therefore, performing MMSE equalization in the frequency domain has a significant advantage in terms of receiver complexity. This is similar to the case that conventional frequency domain one-tap equalization is less complicated than the time domain linear equalization over a time-invariant channel. In general, we have the following remark regarding the MMSE equalization complexity.
\begin{remark}
	The computational complexity of the frequency domain MMSE equalization is $O((1+4K_{max})^{2}MN)$  in terms of the number of complex multiplications and divisions, where $MN$ is the length of the signal frame and $K_{max}$ is the number of positive (or negative) Doppler frequencies. The details are shown in Appendix A.
\end{remark}
\par Note that if conventional MMSE equalization is used without exploring the structure of the frequency-Doppler domain channel matrix, the computational complexity will be $O((MN)^{3})$. An example is provided here to demonstrate the advantage of frequency-domain MMSE equalization. In a practical scenario as defined in the ETSI 5G channel models (for more details see Table I in Section VI), we have $M= 256$, $N=32$ and $K_{max} =3$. With frequency domain MMSE equalization, the number of complex multiplications/divisions is in the order of $1.4\times 10^{6}$, whereas with time domain MMSE equalization it is $5.5\times 10^{11}$.

\subsection{MMSE Equalization with Partially Resolvable Doppler Spread}

\par If the signal frame is shorter than $1/f_{r}$, the received signal will not be able to resolve all the Doppler frequencies. This is the case for many existing systems. Without loss of generality, we assume that the short signal frame has $M$ samples with length $T=Md_{r}$ whereas the long signal frame has $MN$ samples with length $NT$. We also assume that the CPs have the same length $T_{cp}=L_{cp}d_{r}$ for both the long and short signal frames, where $L_{cp}$ is the number of samples in a CP. Fig. 4 provides a comparison between long and short signal frames. Given the Doppler resolution $f_{r}$, one period of the frequency-time domain representation of the fast fading channel is also illustrated to show the channel variation. We see that the short signal frame only experiences a part rather than a full period of the channel variation. 
\begin{figure}[t]
	\centering
	\includegraphics[width=1\linewidth]{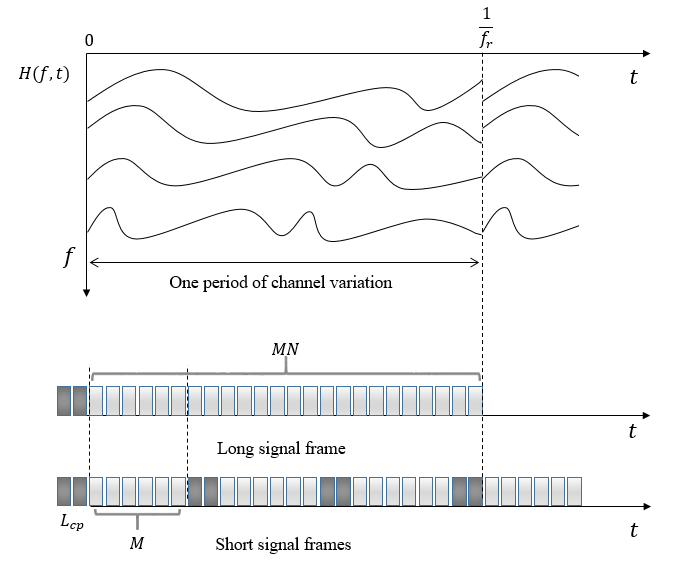}
	\caption{Comparison between long and short signal frames.}
	\label{fig:activeIndex}
\end{figure}
\par Suppose that the first short signal frame is aligned with the long signal frame. After passing through the same fast fading channel, the $n$-th received short-frame signal, $n=0,1,...N-1$, can be expressed in the time domain as
\begin{align}\label{OFDMTD}
\mathbf{r}^{(n)} = \mathbf{H}_{t}^{(n)}\mathbf{s}^{(n)} + \mathbf{w}^{(n)},
\end{align}
where $\mathbf{s}^{(n)}$ and $\mathbf{w}^{(n)}$ are the $n$-th transmitted signal vector and noise vector respectively, and  $\mathbf{H}_{t}^{(n)}$ is the $M\times M$ delay-time channel matrix which can be constructed similar to Eq. \eqref{Htmatrix} and is shown in Fig. 2 except that the size of the matrix is $M\times M$ and the delay-time channel representation for the $n$-th short signal frame is now $h_t^{(n)}(\tau,t)=h_t(\tau,t+n(T+T_{cp}))$ for $0\leq t<T$.
In the frequency domain, the $n$-th short-frame received signal can be expressed as
\begin{align}\label{OFDMFD}
\mathbf{R}^{(n)} = \mathbf{H}_{\nu}^{(n)}\mathbf{S}^{(n)} + \mathbf{W}^{(n)},
\end{align}
where $\mathbf{S}^{(n)}$ and $\mathbf{W}^{(n)}$ are the $n$-th frequency domain transmitted signal vector and noise vector respectively, and  $\mathbf{H}_{\nu}^{(n)} $is the $M\times M$ frequency-Doppler channel matrix which can be constructed similar to Eq. \eqref{Hvmatrix}, with reference to Fig. 3. However, the discrete frequency-Doppler channel representation for the $n$-th short signal frame should now be calculated as
\begin{align}\label{OFDMHv}
&H_{\nu}^{(n)}[i,j]\nonumber\\
&=\sum_{j'=-K_{max}}^{K_{max}}e^{\frac{\mathbf{j}2\pi(M+L_{cp})nj'}{MN} }H_{\nu}[iN,j'] \phi(\frac{2\pi}{MN}(jN-j')), \nonumber\\
&0\leq i,j\leq M-1,
\end{align}
where $\phi(w)$ is the Fourier transform of a discrete rectangular window function defined as 
\begin{align}\label{eq1}
\phi(\omega)=\frac{\sin(\omega M/2)}{M\sin(\omega /2)}e^{-\mathbf{j}\omega (M-1)/2}.
\end{align}
\par The calculation expressed in Eq. \eqref{OFDMHv} can be explained as follows.  In order to calculate the discrete frequency-Doppler representation of the fast fading channel during the $n$-th short signal frame, we need its frequency-time representation which is a time shifted version of $H(f,t)$ advanced by $n(T+T_{cp})$, i.e., $H^{(n)}(f,t)=H(f,t+n(T+T_{cp}))$ for $0\leq t<T$,   and then weighted by a rectangular window function of width $T$. The windowing becomes a convolution in the frequency-Doppler representation. Eq. \eqref{OFDMHv} is the discrete version of such convolution. The result is finally down-sampled by $N$ times as the frequency and Doppler resolution is reduced by $N$ times.

\par Similar to Corollary 1, we have  $\mathbf{H}_{\nu}^{(n)} = \mathbf{F}_{M} \mathbf{H}_{t}^{(n)}\mathbf{F}^{\mathbf{H}}_{M}$, where $\mathbf{F}_{M}$ and $\mathbf{F}^{\mathbf{H}}_{M}$ are the DFT and IDFT matrices, respectively,  satisfying  $\mathbf{F}_{M} \mathbf{F}^{\mathbf{H}}_{M} = \mathbf{F}^{\mathbf{H}}_{M}\mathbf{F}_{M} =\mathbf{I}_M$ and $\mathbf{I}_M$ is the identity matrix of order $M$. 

\par Given the discrete signal models in both the time and frequency domains shown in \eqref{OFDMTD} and \eqref{OFDMFD}, the MMSE equalization methods for the short signal frames can be derived accordingly similar to \eqref{MMSETDGmatrix} and \eqref{MMSEFDGmatrix}, and the same equalization performance can be achieved in either time or frequency domain. Due to the reduced frame length, the equalization complexity is also reduced.
\begin{remark}
	Though the signal model expressed in \eqref{OFDMFD} is similar to those found in the literature, e.g., \cite{BOOK2} and those mentioned in Remark 2, the frequency-Doppler channel matrix $\mathbf{H}^{(n)}_{\nu}$ has different meanings, i.e., each line in the diagonal stripe of the matrix does not represent a resolvable Doppler frequency, but a convolution of the frequency-Doppler transfer function with the Fourier transform of a windowing function which reflects the shorter signal frame length. In addition, considering a full period of channel time variation given the Doppler resolution, the location of the short signal frame also has an impact on $\mathbf{H}^{(n)}_{\nu}$, which is reflected by the phase factor in \eqref{OFDMHv}. 
\end{remark}
\section{Performance Analysis}
In this section, we show that the proposed MMSE equalization can be applied to various systems with reduced complexity and improved performance. To do so, we first formulate the equalized output data symbols in relation to the transmitted ones for various modulation schemes, then provide the SNR analysis for the output data symbols after MMSE equalization, and finally derive the theoretical  bit error rate (BER).
\subsection{Input-Output Relationships}
\par Three modulation schemes, OTFS, OFDM and SC-FDE, are considered here. Regarding Doppler resolution, OTFS can fully resolve all the Doppler frequencies due to its long signal frame, whereas OFDM and SC-FDE can only resolve part of them due to shorter signal frames. 
\par Assume that the total number of data symbols to be transmitted is $MN$ and let $\mathbf{X} \in \mathbb{C}^{M\times N}$ denote the $M\times N$ 2D OTFS data symbol matrix. After the inverse symplectic finite Fourier
transform (ISFFT), Heisenberg transform and pulse shaping, CP is prepended to the signal frame. The modulated OTFS signal is then transmitted over a fast fading channel and a time-domain sequence $r[i], i=0,1,...MN-1$, is received.  According to \cite{8516353}  and the channel model \eqref{TDmodel}, the received signal can be expressed in the matrix form
\begin{align}\label{eq1}
\mathbf{r}=\mathbf{H}_{t}(\mathbf{F}^{\mathbf{H}}_{N}\otimes \mathbf{I}_{M})\mathbf{x}+\mathbf{w},
\end{align}
where $\mathbf{F}^{\mathbf{H}}_{N}$ denotes the $N$-point IFFT matrix, $\otimes$ denotes Kronecker product, $\mathbf{x}=vec(\mathbf{X})$ is the vectorized form of matrix $\mathbf{X}$ and $\mathbf{w}$ is the noise vector. 
The received data symbols after MMSE equalization can be expressed as
\begin{align}\label{eq1}
\mathbf{y}&=(\mathbf{F}_{N}\otimes \mathbf{I}_{M})\mathbf{G}_{t}\mathbf{H}_{t}(\mathbf{F}^{\mathbf{H}}_{N}\otimes \mathbf{I}_{M})\mathbf{x}+(\mathbf{F}_{N}\otimes \mathbf{I}_{M})\mathbf{G}_{t}\mathbf{w}.
\end{align}
Similarly, letting $\mathbf{x}^{(n)}$ denote the $n$-th transmitted OFDM or SC-FDE data symbol vector, the $n$-th received OFDM data symbol vector after MMSE equalization can be expressed as
\begin{align}\label{eq1}
\mathbf{y}^{(n)}&=\mathbf{F}_{M}\mathbf{G}_{t}^{(n)}\mathbf{H}_{t}^{(n)}\mathbf{F}^{\mathbf{H}}_{M}\mathbf{x}^{(n)}+\mathbf{F}_{M}\mathbf{G}_{t}^{(n)}\mathbf{w},
\end{align}
where  $\mathbf{G}_{t}^{(n)}$ denotes the time domain equalization matrix for the $n$-th OFDM symbol.
Assuming SC-FDE has the same  frame structure, the $n$-th received data symbol vector for SC-FDE can be expressed as 
\begin{align}\label{eq1}
\mathbf{y}^{(n)}&=\mathbf{G}_{t}^{(n)}\mathbf{H}_{t}^{(n)}\mathbf{x}^{(n)}+\mathbf{G}_{t}^{(n)}\mathbf{w}.
\end{align}

\par Note that the received data symbol expressions for OTFS, OFDM and SC-FDE have some similarity. Hence, a general representation can be written as
\begin{align}\label{Generalmodel}
\mathbf{y}&=\mathbf{V^{H}}\mathbf{G}_{t}\mathbf{H}_{t}\mathbf{V}\mathbf{x}+\mathbf{V^{H}}\mathbf{G}_{t}\mathbf{w},
\end{align}
where $\mathbf{V}$ and $\mathbf{V^{H}}$ denote the signal modulation and demodulation matrix respectively, satisfying $\mathbf{VV^{H}=V^{H}V=I}_{MN}$ or $\mathbf{I}_{M}$. For example, $\mathbf{V}$ is $\mathbf{F}^{\mathbf{H}}_{N}\otimes \mathbf{I}_{M}$ for OTFS, $\mathbf{F}^{\mathbf{H}}_{M}$ for OFDM, and $\mathbf{I}_{M}$ for SC-FDE. Note that for different modulations, the  size of $\mathbf{G}_{t}$, $\mathbf{H}_{t}$,  $\mathbf{x}$ and $\mathbf{y}$ may be different. The size of both $\mathbf{G}_{t}$ and $\mathbf{H}_{t}$ is $MN\times MN$ for OTFS with $MN\times 1$ signal vectors $\mathbf{x}$ and $\mathbf{y}$, whereas the size of $\mathbf{G}_{t}$ and $\mathbf{H}_{t}$ are $M\times M$ for OFDM and SC-FDE with $M\times 1$ signal vectors $\mathbf{x}$ and $\mathbf{y}$. 
\subsection{Output SNR Analysis}
In the following analysis, we utilize the general representation with time domain MMSE equalization  in \eqref{Generalmodel} to determine the output SNR after equalization, assuming the data matrix size is $M \times N$. According to Theorem 2, the output SNR will be the same if frequency domain MMSE equalization is used. The same analysis can also be applied to OFDM and SC-FDE but with smaller channel matrix dimensions.  For simplicity, we also define $\mathbf{A = V^{H}G}_{t}\mathbf{H}_{t}\mathbf{V}$ and $\mathbf{B=V^{H}G}_{t}$.
Assuming that the data symbols are independent of each other with the average power $\sigma_{x}^{2}$, i.e., $\mathbf{E\{xx^{H}}\}=\sigma_{x}^{2}\mathbf{I}$ and the noise power is $\sigma^{2}_{w}$, i.e., $\mathbf{E\{ww^{H}}\}= \sigma^{2}_{w}\mathbf{I}$,  where $\mathbf{E}\{\cdot\}$ denotes ensemble expectation,
the covariance matrix of $\mathbf{y}$ can be derived as
\begin{align}\label{covariancey}
\mathbf{E\{yy^{H}}\}&=\mathbf{AE\{xx^{H}\}A^{H}+BE\{ww^{H}\}B^{H}}\nonumber\\
&=\mathbf{C}\sigma_{x}^{2}+\mathbf{D}\sigma_{w}^{2},
\end{align}
where $\mathbf{C}$ and $\mathbf{D}$ denote $\mathbf{AA^{H}}$ and $\mathbf{BB^{H}}$ respectively. To detect the $(nM+m)$-th data symbol at the $m$-th row and the $n$-th column in $\mathbf{X}$ , the useful signal power after equalization is
\begin{align}\label{eq1}
&|\mathbf{A}[nM+m,nM+m]|^{2}\sigma_{x}^{2}=q_{0}[m,n],
\end{align}
where $\mathbf{A}[i,j]$ denotes the element of  $\mathbf{A}$ at the $i$-th row and the $j$-th column.

The average total power of the $(nM+m)$-th element in $\mathbf{y}$ can also be expressed from \eqref{covariancey} as
\begin{align}\label{eq1}
&\mathbf{C}[nM+m,nM+m]\sigma_{x}^{2}+\mathbf{D}[nM+m,nM+m]\sigma_{w}^{2}\nonumber\\
&=q_{1}[m,n].
\end{align}
For simplicity, let $a_{m,n}$, $c_{m,n}$ and $d_{m,n}$ denote $\mathbf{A}[nM+m,nM+m]$, $\mathbf{C}[nM+m,nM+m]$ and $\mathbf{D}[nM+m,nM+m]$ respectively. Then, the output SNR after equalization can be expressed as 
\begin{align}\label{SNRbypower}
\gamma_{out}[m,n]&=\frac{q_{0}[m,n]}{q_{1}[m,n]-q_{0}[m,n]}\nonumber\\
&=\frac{|a_{m,n}|^{2}}{c_{m,n}+\frac{1}{\gamma_{in}}d_{m,n}-|a_{m,n}|^{2}}\nonumber\\
&=\frac{1}{1-\frac{|a_{m,n}|^{2}}{c_{m,n}+\frac{1}{\gamma_{in}}d_{m,n}}}-1,
\end{align}
where the input SNR is defined as $\gamma_{in} = \sigma_{x}^{2}/\sigma_{w}^{2}$.

Eq. \eqref{SNRbypower} evaluates the output SNR directly based on the equalized data symbol vector expression \eqref{Generalmodel}, where the impact of channel condition and signal modulation on the output SNR is not explicitly shown. 
To demonstrate how the output SNR is affected by the channel and the signal modulation, we express $\mathbf{A}$  as 
\begin{align}\label{eq1}
\mathbf{A}&=\mathbf{V^{H}}\mathbf{H}^{\mathbf{H}}_{t}(\mathbf{H}_{t}\mathbf{H}_{t}^{\mathbf{H}}+\frac{1}{\gamma_{in}}\mathbf{I})^{-1}\mathbf{H}_{t}\mathbf{V}\nonumber\\
&=\mathbf{V^{H}}((\mathbf{H}_{t})^{-1}\mathbf{H}_{t}\mathbf{H}_{t}^{\mathbf{H}}(\mathbf{H}_{t}^{\mathbf{H}})^{-1}+\frac{1}{\gamma_{in}}(\mathbf{H}^{\mathbf{H}}_{t}\mathbf{H}_{t})^{-1})^{-1}\mathbf{V}\nonumber\\
&=\mathbf{V^{H}}(\mathbf{I}+\frac{1}{\gamma_{in}}(\mathbf{H}^{\mathbf{H}}_{t}\mathbf{H}_{t})^{-1})^{-1}\mathbf{V}\nonumber\\
&=(\mathbf{I}+\frac{1}{\gamma_{in}}(\mathbf{V^{H}}\mathbf{H}_{t}^{\mathbf{H}}\mathbf{H}_{t}\mathbf{V})^{-1})^{-1}.
\end{align}
Here, $\mathbf{H}_{t}^{\mathbf{H}}\mathbf{H}_{t}$ is a Hermitian matrix and can be expressed through eigenvalue decomposition as $\mathbf{H}_{t}^{\mathbf{H}}\mathbf{H}_{t}=\mathbf{Q\Lambda Q^{H}}$, where $\mathbf{Q}$ is a square $MN\times MN$ unitary matrix and $\mathbf{\Lambda}$ is a diagonal matrix  with the $i$-th diagonal element $\lambda_{i}$  for $i = 0,1,...,MN-1.$ Further denoting $\mathbf{U} = \mathbf{V^{H}}\mathbf{Q}$, we can simplify $\mathbf{A}$ as
\begin{align}\label{eq1}
\mathbf{A}&=(\mathbf{I}+\frac{1}{\gamma_{in}}\mathbf{U\Lambda ^{-1}U^{H}})^{-1}\nonumber\\
&=\mathbf{U}(\mathbf{I}+\frac{1}{\gamma_{in}}\mathbf{\Lambda ^{-1}})^{-1}\mathbf{U^{H}}\nonumber\\
&=\mathbf{U}(\text{diag}(1+\frac{1}{\gamma_{in}\lambda_{i}}))^{-1}\mathbf{U^{H}}\nonumber\\
&=\mathbf{U}\text{diag}(\frac{\gamma_{in}\lambda_{i}}{\gamma_{in}\lambda_{i}+1})\mathbf{U^{H}}\nonumber\\
&=\mathbf{I}-\mathbf{U}\text{diag}(\frac{1}{\gamma_{in}\lambda_{i}+1})\mathbf{U^{H}},
\end{align}
where $\text{diag}(x_{i})$ denotes a diagonal matrix with the $i$-th diagonal element $x_{i}$. According to the MMSE equalization principle, the normalized noise power for the $(nM+m)$-th equalized data symbol can be expressed as
\begin{align}\label{noisepower}
J_{nM+m}&=1-\mathbf{A}[nM+m,nM+m]\nonumber\\
&=\sum_{i=0}^{M N-1}\frac{1}{\gamma_{in}\lambda_{i}+1}|\mathbf{U}[nM+m,i]|^{2}.
\end{align} 
Therefore, the output SNR after equalization can be expressed as 
\begin{align}\label{outputSNR}
\gamma_{out}[m,n]&=\frac{1-J_{nM+m}}{J_{nM+m}}\nonumber\\
&=\frac{1}{J_{nM+m}}-1.
\end{align}
It can be approved that \eqref{SNRbypower} and \eqref{outputSNR} are equivalent as shown in Appendix B.

\par Similarly, the SNR analysis can be performed based on the eigenvalue decomposition of $\mathbf{H_{\nu}^{H}H_{\nu}}$, which produces the same eigenvalues as those of $\mathbf{H}_{t}^{\mathbf{H}}\mathbf{H}_{t}$ but with Fourier transformed eigenvectors.


\par From \eqref{noisepower} and \eqref{outputSNR}, the output SNR is determined by the eigenvalues of $\mathbf{H}_{t}^{\mathbf{H}}\mathbf{H}_{t}$, which characterize the channel fading condition, as well as the matrix $\mathbf{U}$ that is related to the signal modulation. This relationship can help us understand why OTFS can gain both frequency and time diversity. Applying the same analysis to OFDM and SC-FDE, we can also see why OFDM cannot exploit frequency diversity but SC-FDE can. 
\subsection{Theoretical BER}
Given a specific fading channel, the average BER for an $M$-ary quadrature amplitude modulation (QAM) is
\begin{align}\label{eq1}
P_{b}&=\frac{\sum_{m=0}^{M-1}\sum_{n=0}^{N-1}\frac{2(1-2^{-k})}{k}Q(\sqrt{\frac{3}{4^{k}-1}\gamma_{out}[m,n]})}{MN},
\end{align}
where the Q-function is defined as $Q(x)=\frac{1}{\sqrt{2\pi}}\int_{x}^{\infty}e^{-\frac{t^{2}}{2}}dt$ and $2^{2k}$ indicates the modulation level. For example, $k=1$ means 4-QAM or QPSK~\cite{BOOK1}. Averaging over all possible fading channels, the ergodic BER for the fast fading channel is expressed as $E_{h}\{P_{b}\}$, where $E_{h}\{\cdot\}$ denotes the ensemble average over all delay-Doppler channel realizations.  Similarly, the  theoretical BERs of OFDM and SC-FDE can also be obtained.
\section{Simulation Results}
\par In this section, the BER performance of the frequency domain MMSE equalization is compared over fast fading channels among OTFS, OFDM and SC-FDE with uncoded 4-QAM modulation. The ETSI tapped delay line (TDL) models, which are valid for frequency range from 0.5 GHz to 100 GHz with a maximum bandwidth of 2 GHz, are adopted as the multipath channel models~\cite{ETSITDL}. TDL channel models define the time delay over LOS (more accurately, Racian channel with dominating LOS path) and NLOS channels in different scenarios. According to \cite{ETSITDL}, the simulation parameters are listed in Table I with the following assumptions. Firstly, the channel is assumed perfectly known at the receiver, i.e., ideal channel estimation is adopted. Secondly, each TDL channel model  defines several application scenarios with different time delay spreads. However, as has been reported in \cite{8855898}, very similar performance is achieved in different scenarios. As such, we only take the urban macrocell (UMa) channels as examples in our simulation. Both LOS and NLOS channels are simulated using TDL-D and TDL-A models respectively. The maximum multipath delays $d_{max}$ are $4.55$ $\mu s$ and $3.51$ $\mu s$ for LOS and NLOS channels respectively. Thirdly, the Doppler frequency shifts are added to all paths, which vary in every channel realization and obey uniform distribution over $[-K_{max},K_{max}]$. 

\begin{table}[t]
	\centering
	\caption{Simulation Parameters}
	\label{table:nonlin}
	\renewcommand\arraystretch{1.4}
	\begin{tabular}{|c|c|c|}
		\hline
		\begin{tabular}[c]{@{}c@{}}Carrier \\ Frequency\\ ($f_{c}$)\end{tabular}  & \begin{tabular}[c]{@{}c@{}}No. of \\ Subcarriers\\ ($M$)\end{tabular}         & \begin{tabular}[c]{@{}c@{}}No. of  \\ OFDM/SC-FDE\\ Symbols ($N$)\end{tabular}   \\ \hline 
		6 GHz & 256  & 32   \\ \hline
		\begin{tabular}[c]{@{}c@{}}Subcarrier\\ Spacing\\ ($\Delta f$)\end{tabular}  &  \begin{tabular}[c]{@{}c@{}}Bandwidth\\ ($W=M\Delta f$)\end{tabular}  & \begin{tabular}[c]{@{}c@{}}Duration of \\ OFDM/SC-FDE\\ Symbol\\ ($T=M/W$)\end{tabular}  \\ \hline
		30 KHz  & 7.68 MHz &  33.33 $\mu$s \\ \hline
		\begin{tabular}[c]{@{}c@{}}Delay\\ Resolution\\ ($d_{r}=1/W$)\end{tabular} & \begin{tabular}[c]{@{}c@{}}Doppler\\ Resolution\\ ($f_{r}=1/NT$)\end{tabular} & \begin{tabular}[c]{@{}c@{}}Maximum\\ Speed\\ ($v_{max}$)\end{tabular}             \\ \hline
		130.21 ns  & 937.5 Hz & 500 Km/h   \\ \hline
		\begin{tabular}[c]{@{}c@{}}Maximum\\ Doppler Frequency\\ ($f_{max}=f_{c}\frac{v_{max}}{v_{c}}$,\\ $v_{c}=3\times 10^{8} \  m/s$)\end{tabular}  &   \begin{tabular}[c]{@{}c@{}}No. of \\ Doppler  Shifts \\ (Positive or Negative) \\ ($K_{max}=\left \lceil \frac{f_{max}}{f_{r}} \right \rceil$)\end{tabular}  & \begin{tabular}[c]{@{}c@{}}No. of \\ Multipaths\\ ($L_{max}=\left \lceil \frac{d_{max}}{d_{r}} \right \rceil$)\end{tabular}   \\ \hline
		2777.8 Hz & $\approx 3$ & $\approx35$(LOS), $27$(NLOS)   \\ \hline                                                           
	\end{tabular}
\end{table}
\par Fig. 5 plots one realization of the frequency-time representation of the fast fading TDL-D channel, and the channel variation with time is clearly shown  in the figure. Fig. 6 shows the frequency-Doppler channel matrix $\mathbf{H_{\nu}}$ derived from $H(f,t)$ for OTFS with full Doppler resolution. We see that the non-zero elements appear in a narrow stripe along the diagonal lines and other elements are all zeros, which validates the analysis in Section III. According to the parameters shown in Table I, the width of the stripe is $2K_{max}+1=7$, and is very small compared with the $8192\times 8192$ matrix size. The high sparsity of $\mathbf{H}_{\nu}$ allows for more efficient matrix inversion operation as analyzed in Section IV. Therefore, the frequency domain MMSE equalization can be realized at a low complexity. The frequency-Doppler channel matrix for OFDM and SC-FDE can also be derived from $H(f,t)$, which is an $M\times M$ matrix, and an example is shown in Fig. 7. It also exhibits an obvious diagonal stripe. 
\begin{figure}[t]
	\centering
	\includegraphics[width=1\linewidth]{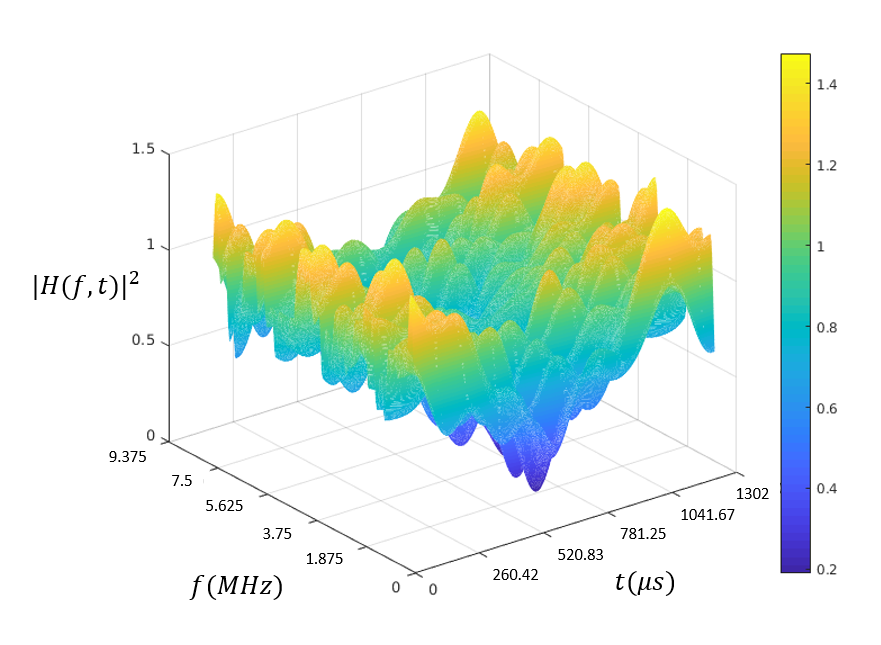}
	\caption{Frequency-time domain representation of fast fading channels.}
	\label{fig:activeIndex}
\end{figure} 
\begin{figure}[t]
	\centering
	\includegraphics[width=1\linewidth]{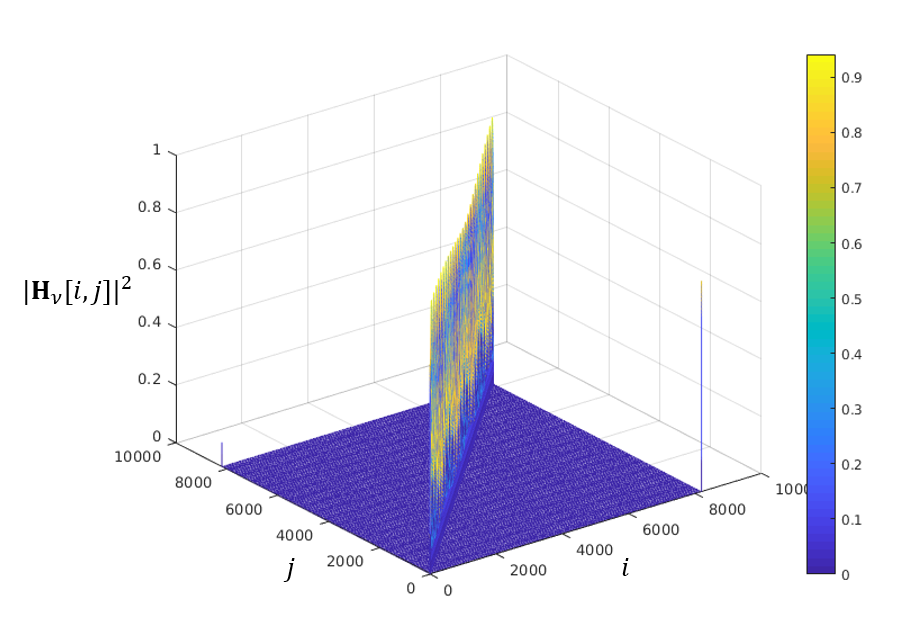}
	\caption{Frequency-Doppler domain channel matrix for OTFS over fast fading channels.}
	\label{fig:activeIndex}
\end{figure}
\begin{figure}[t]
	\centering
	\includegraphics[width=1\linewidth]{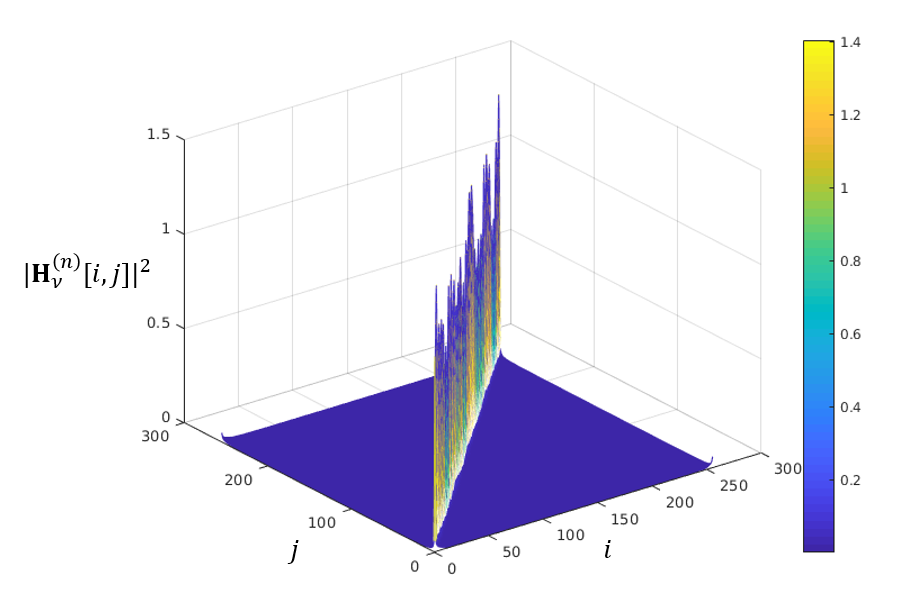}
	\caption{Frequency-Doppler domain channel matrix for OFDM and SC-FDE over fast fading channels.}
	\label{fig:activeIndex}
\end{figure}
\par The simulated BER performance of OTFS, OFDM and SC-FDE in fast fading LOS channel is shown in Fig. 8. We observe that systems with the proposed frequency domain MMSE equalization perform significantly better than those with the conventional one-tap frequency domain equalization method. The performance of OTFS is the best, with $10^{-4}$ BER at about 14 dB SNR. In comparison, at the same BER, SC-FDE has only less than 1 dB performance degradation, while OFDM has a large 4 dB degradation due to the lack of frequency diversity. We also see that the theoretical BER curves perfectly match the simulation results. Since a perfect channel state information (CSI) can not be achieved in practice, we also simulate the performance when imperfect channel estimation is applied in the channel equalization. The imperfect channel is simulated by adding a random error matrix obeying zero-mean Gaussian distribution into the estimated channel matrix, where the variance of the channel error is assumed to be inversely proportional to the SNR. The practical performance of OTFS, OFDM and SC-FDE in the LOS channel is also shown in Fig. 8. We see that the impact of channel estimation error on the performance is significant only at a lower SNR but tends to be minor at a higher SNR. 
\begin{figure}[htbp]
	\centering
	\includegraphics[width=1\linewidth]{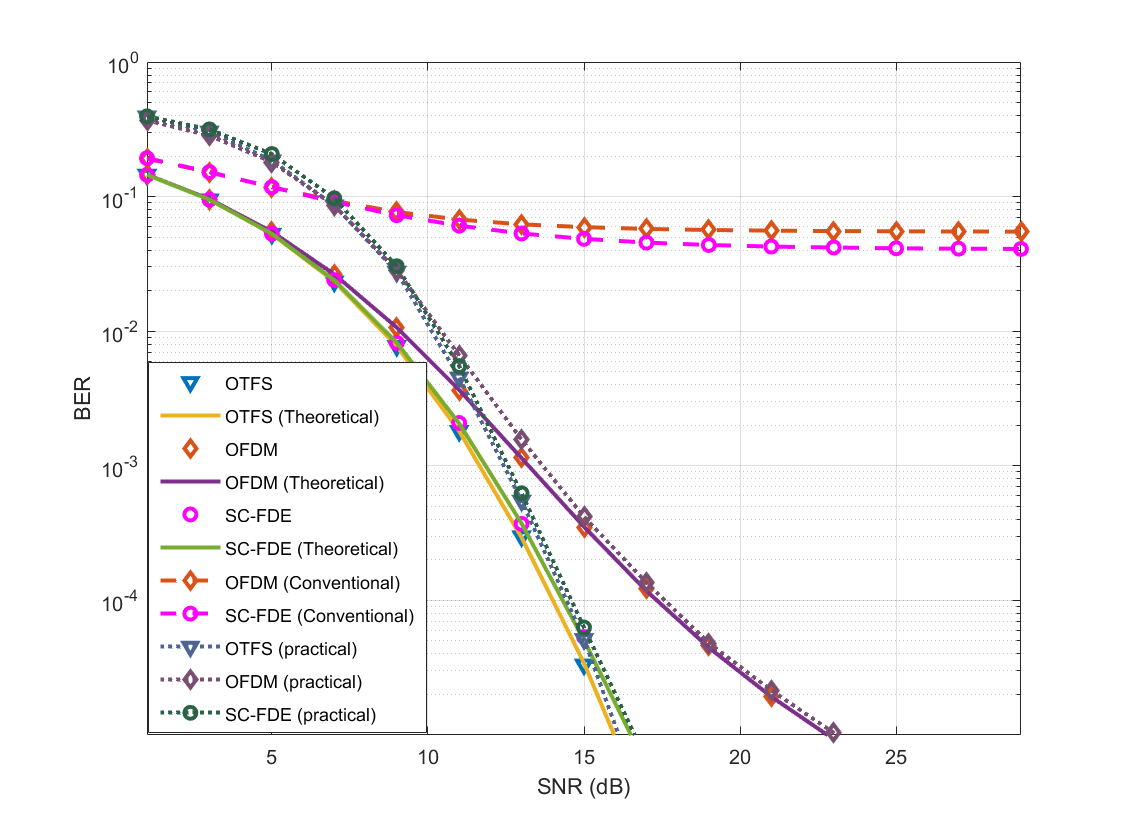}
	\caption{Performance comparison in TDL-D channel.}
	\label{fig:activeIndex}
\end{figure}
\par Fig. 9 shows the BER performance in fast fading NLOS channels.  We observe that OTFS performs much better than the other two. The performance of SC-FDE is degraded as compared with OTFS but is better than that of OFDM. Overall, the proposed MMSE equalization improves the performance of OFDM and SC-FDE in fast fading channel significantly under both LOS and NLOS conditions. SC-FDE performs very well, especially in LOS channels. The impact of imperfect channel estimation on equalization performance is also investigated in NLOS channel conditions. In Fig. 9, the practical results demonstrate a similar trend as shown in the LOS channel. These results verify our proposed equalization algorithm is feasible for coping with fast fading channel in practice. 
\begin{figure}[htbp]
	\centering
	\includegraphics[width=1\linewidth]{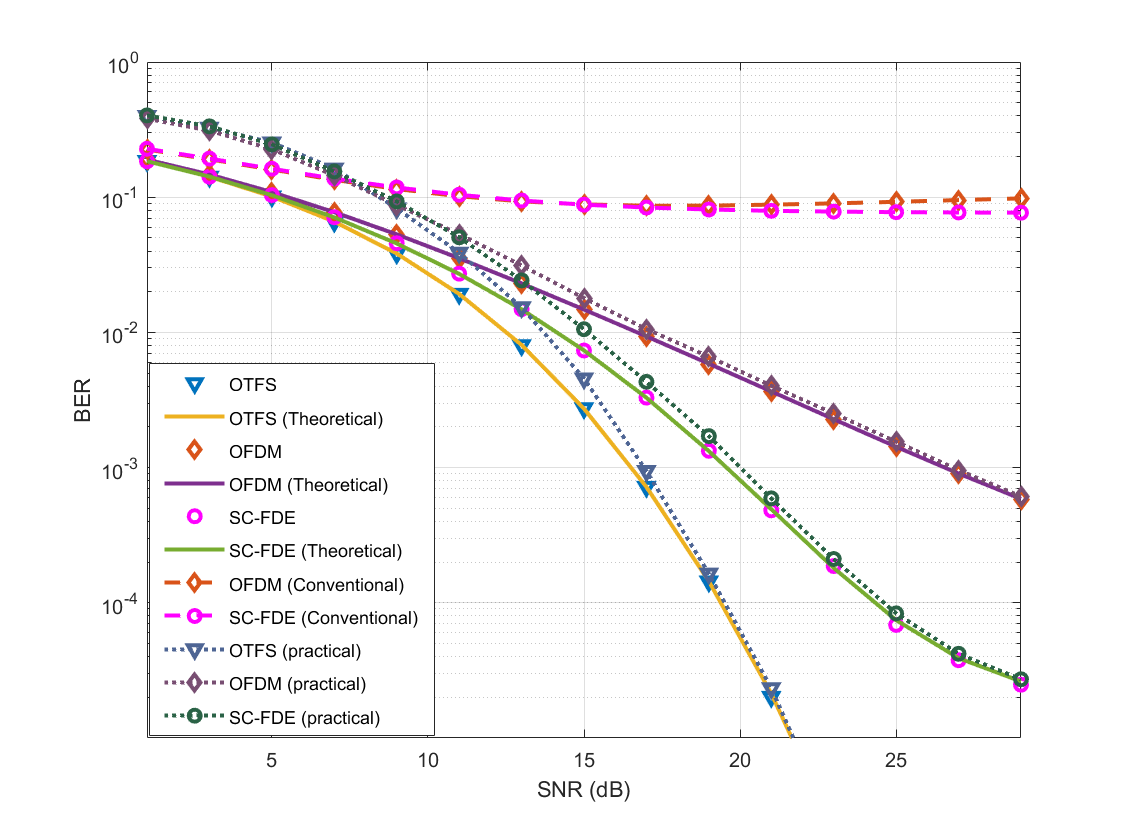}
	\caption{Performance comparison in TDL-A channel.}
	\label{fig:activeIndex}
\end{figure}

Finally, we compare the proposed low-complexity frequency domain equalization method with the popular MP algorithm proposed in \cite{8377159}. Based on the same parameter settings, the BER performance over NLOS channels is simulated and compared. Note that the MP equalization requires a number of iterations that affect both the computational complexity and the BER performance. The BER performance for MMSE and MP equalization is compared in Fig. 10, where the number of iterations is selected as 10, 20, 30, 40 and 50 for MP equalization.
The BER performance under MP is better than that of MMSE in low SNR ranges. However, MMSE outperforms MP in high SNR region. We also observe that there are error floors for MP even with a large number of iterations. To demonstrate the significant complexity reduction achieved by MMSE, we compare the calculation complexity by counting the number of arithmetic operations (multiplication, division and logarithm). Under the ETSI NLOS channels shown in Table \ref{table:nonlin} with $K_{max} = 3$ and $L_{max} = 27$, the calculation complexity for MMSE equalization is around $O(169\times MN)$, while the calculation complexity for MP is around $O(2700 \times MN\times n_{iter})$, where $n_{iter}$ is the number of iterations \cite{8424569}. It is evident that the MMSE equalization is superior to MP in term of complexity and hence is more suitable for practical applications. 
\begin{figure}[htbp]
	\centering
	\includegraphics[width=1\linewidth]{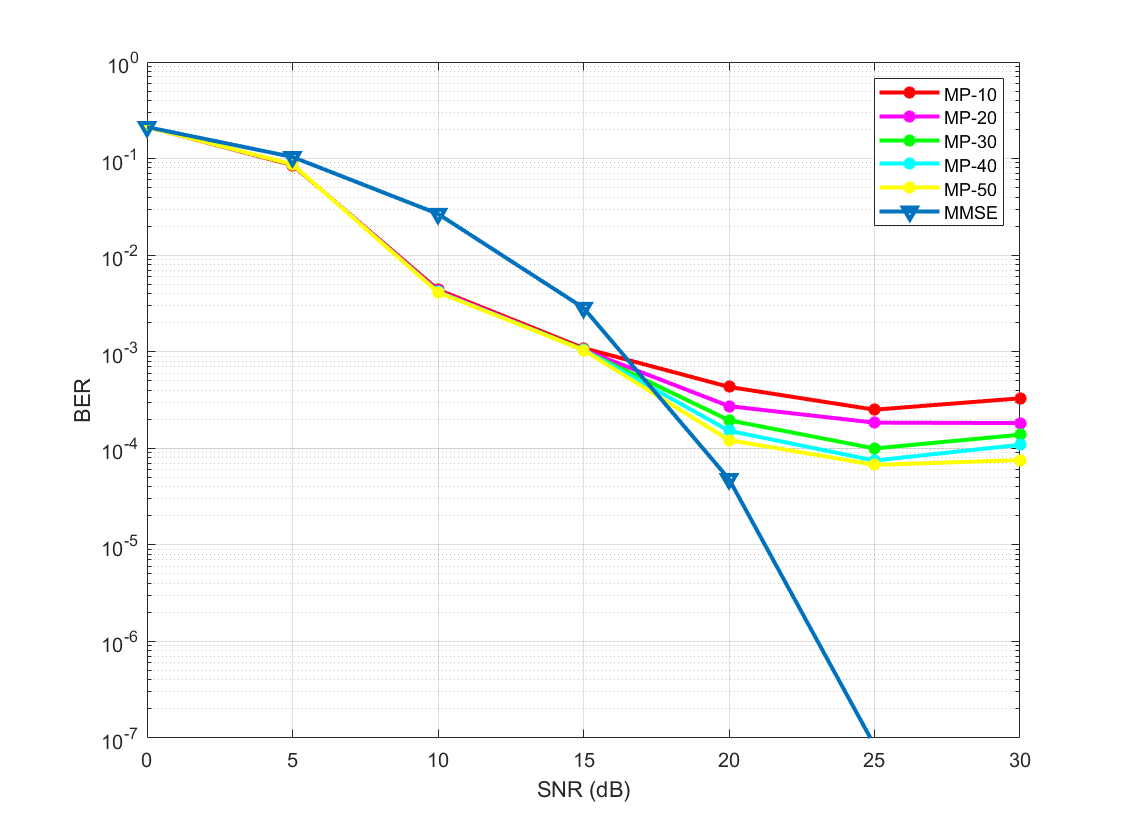}
	\caption{Performance comparison under MMSE and MP equalization.}
	\label{fig:activeIndex}
\end{figure}

\section{Conclusions}
In this paper, we first derived more efficient signal models for fast fading channels, and then proposed new low-complexity and efficient MMSE equalization for both existing OFDM and SC-FDE systems and emerging OTFS systems. Through formulating the frequency-Doppler channel matrix as a circular stripe diagonal matrix, we demonstrated that low-complexity MMSE equalization becomes feasible for systems operating in fast fading channels. We also derived the signal models and proposed MMSE equalization for channels where the Doppler spread is partially-resolvable, which enables effective application of conventional modulations with short signal frames in fast fading channels. In addition, we analyzed the theoretical equalization performance via channel matrix eigenvalue decomposition, providing a useful tool for characterizing the influence of channel conditions and signal modulations on the output SNR after equalization. The simulated BER performance validates that the proposed MMSE equalization can effectively exploit the time diversity with low complexity in fast fading channels for OTFS, OFDM and SC-FDE. The ability of achieving full time and frequency diversity makes OTFS outperform others. We demonstrated that the performance of SC-FDE approaches that of OTFS since it can exploit full frequency diversity and partial time diversity, and OFDM performs the worst due to the lack of frequency diversity though partial time diversity can be achieved with the proposed MMSE equalization. We also simulated the performance with imperfect channel estimation and the results prove the feasibility of MMSE equalization in practice.







%

\appendices

\section{Complexity of Frequency Domain MMSE Equalization}
\par Let us consider the number of complex multiplications and/or divisions associated with calculating the inverse of a $MN\times MN$ circular stripe diagonal matrix with stripe width $1+2k$. Using the Gaussian elimination method, the inversion can be carried out as follows.\\
\begin{figure}[htbp]
	\centering
	\includegraphics[width=1\linewidth]{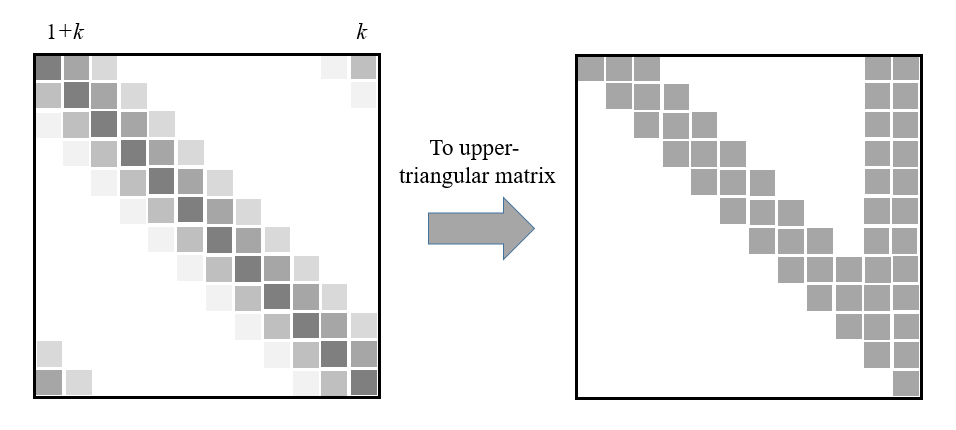}
	\caption{Illustration of Gaussian elimination.}
	\label{Guassian}
\end{figure}
\par (1) We start with the initial footprint of the circular stripe diagonal matrix as shown on the left side of Fig. \ref{Guassian}. For the first row, we  divide all the elements in the row by the first element and cancel the first elements of the other $2k$ rows  with non-zero first elements respectively, which has a complexity of $(1+2k)$ (divisions) $+$ $2k(1+2k)$ (multiplications). This process is repeated from the first to the $(MN-2k)$-th rows. 
\par (2) For the last $2k$ rows, the complexity for canceling all non-zero elements can be expressed as $(2k)^{2}+(2k-1)^{2}+...+1$ to obtain an  upper-triangular matrix with footprint as shown on the right side of Fig. \ref{Guassian}.
\par (3) We then continue to perform the back substitution to turn the upper-triangular matrix to a diagonal matrix. For the first $k$ columns, the complexity can be expressed as $(MN-1)+(MN-2)+...+(MN-k)$.
\par (4) For the $(k+1)$-th to $(MN-k)$-th columns, the complexity can be expressed as $k(MN-2k)$.
\par (5) For the last $k$ columns, the complexity can be expressed as $(k-1)+(k-2)+...+1$.
\par The total computational complexity of matrix inversion is 
\begin{align}\label{eq1}	
C_{total} &= (1+2k)^{2}(MN-2k)+\sum_{n=1}^{2k}n^{2}+\sum_{n=1}^{k}(MN-n)\nonumber\\
&+k(MN-2k)+\sum_{n=1}^{2k}n^{2}\nonumber\\
&=(1+2k)^{2}(MN-2k)+\frac{1}{6}2k(2k+1)(4k+1)\nonumber\\
&+2kMN-2k^{2}-k, 
\end{align}
which has the complexity of $O((1+2k)^{2}MN)$ approximately.  From \eqref{MMSEFDGmatrix}, the matrix $\mathbf{H}_{\nu}$ has a stripe width of $1+2\times K_{max}$, and hence the matrix $(\mathbf{H}_{\nu}\mathbf{H}_{\nu}^{\mathbf{H}}+(1/ \gamma_{in}) \mathbf{I})$ has  a stripe width of $1+4\times K_{max}$. Therefore, the complexity of the frequency domain MMSE equalization is $O((1+4K_{max})^{2}MN)$.

\section{Equivalence of Output SNR Expressions}
\par We first simplify the covariance matrix of $\mathbf{y}$ in \eqref{covariancey} as 
\begin{align}\label{eq1}	
&\mathbf{E\{yy^{H}\}}=\mathbf{C}\sigma_{x}^{2}+\mathbf{D}\sigma_{w}^{2}\nonumber\\
&=\sigma_{x}^{2}(\mathbf{C}+\frac{1}{\gamma_{in}}\mathbf{D})\nonumber\\
&=\sigma_{x}^{2}(\mathbf{V^{H}G}_{t}\mathbf{H}_{t}\mathbf{V}\mathbf{(V^{H}G}_{t}\mathbf{H}_{t}\mathbf{V)^{H}}+\frac{1}{\gamma_{in}}\mathbf{V^{H}G}_{t}\mathbf{(V^{H}G}_{t})^{\mathbf{H}})\nonumber\\
&=\sigma_{x}^{2}\mathbf{V^{H}G}_{t}(\mathbf{H}_{t}\mathbf{H}_{t}^{\mathbf{H}}+\frac{1}{\gamma_{in}}\mathbf{I})\mathbf{G}_{t}^{\mathbf{H}}\mathbf{V}\nonumber\\
&=\sigma_{x}^{2}\mathbf{V^{H}G}_{t}\mathbf{G}_{t}^{-1}\mathbf{H}_{t}^{\mathbf{H}}\mathbf{G}_{t}^{\mathbf{H}}\mathbf{V}\nonumber\\
&=\sigma_{x}^{2}\mathbf{V^{H}}\mathbf{H}_{t}^{\mathbf{H}}\mathbf{G}_{t}^{\mathbf{H}}\mathbf{V}\nonumber\\
&=\sigma_{x}^{2}\mathbf{A^{H}}.
\end{align}
Then, the average total power of the $(nM+m)$-th element in $\mathbf{y}$ can also be expressed as
\begin{equation}\label{eq1}	
\begin{split}
q_{1}[m,n]=\sigma_{x}^{2}\mathbf{A^{H}}[nM+m,nM+m],
\end{split}
\end{equation}
and the output SNR after equalization can be expressed as
\begin{align}\label{eq1}
&\gamma_{out}[m,n]\nonumber\\
&=\frac{|\mathbf{A}[nM+m,nM+m]|^{2}\sigma_{x}^{2}}{\sigma_{x}^{2}\mathbf{A^{H}}[nM+m,nM+m]-|\mathbf{A}[nM+m,nM+m]|^{2}\sigma_{x}^{2}}\nonumber\\
&=\frac{\mathbf{A}[nM+m,nM+m]}{1-\mathbf{A}[nM+m,nM+m]}\nonumber\\
&=\frac{1-J_{nM+m}}{J_{nM+m}}\nonumber\\
&=\frac{1}{J_{nM+m}}-1.
\end{align}

\section*{Acknowledgment}
This research is supported in part by NBN Co under research project PRO18-6384.

\ifCLASSOPTIONcaptionsoff
  \newpage
\fi



%

\bibliographystyle{IEEEtran}
\bibliography{1ref}

%








\end{document}